\newtheorem{theorem}{Theorem}
\newtheorem{definition}{Definition}
\newtheorem{lemma}{Lemma}
\newtheorem*{proposition*}{Proposition}
\newtheorem{corollary}{Corollary}
\newtheorem{example}{Example}
\newtheorem{remark}{Remark}
\newtheorem{proposition}{Proposition}
\newtheorem*{construction*}{Construction}
\newcommand{\vs}[1]{\ensuremath{\mathbf{\boldsymbol{#1}}}}
\renewcommand{\v}[1]{\ensuremath{\mathbf{#1}}}
\newcommand{\e}[1]{\ensuremath{\boldsymbol{\mathscr{#1}}}}
\newcommand{\R}{\mathbb{R}}
\newcommand{\N}{\mathbb{N}}
\newcommand{\K}{\mathbb{K}}
\newcommand{\C}{\mathbb{C}}
\newcommand{\comment}[1]{}
\newcommand{\Ac}{\mathcal{A}}
\newcommand{\Bc}{\mathcal{B}}
\newcommand{\Cc}{\mathcal{C}}
\newcommand{\Mc}{\mathcal{M}}
\title{Recognizable Series on Hypergraphs
}
\author{Rapha\"el BAILLY\and Fran\c{c}ois DENIS\and Guillaume RABUSSEAU\thanks{ \texttt{rbailly@lsi.upc.edu},  \texttt{\{francois.denis, guillaume.rabusseau\}@lif.univ-mrs.fr}}}
\begin{document}
\maketitle
\begin{abstract}
We introduce the notion of \emph{Hypergraph Weighted
  Model} (HWM) that generically associates a tensor network to a hypergraph and then computes a value by tensor contractions directed by its hyperedges. A series $r$ defined on a hypergraph family is said to be recognizable if there exists a HWM that computes it. This model generalizes the notion of rational series on strings and trees. We prove some  properties of the model and study at which conditions finite support series are recognizable.


\end{abstract}

\section{Introduction}
	Real-valued functions whose domains are composed of syntactical
structures, such as strings, trees or graphs, are widely used in
computer science. One way to handle them is by means of rational
series that use automata devices to jointly analyze the structure of the input and
compute its image. Rational series have been defined for strings and
trees, but their extension to graphs is challenging.

On the other hand, rational series have an equivalent algebraic
characterization by means of linear (or multi-linear)
representations.  We show in this paper that this last formalism can
be naturally extended to graphs (and hypergraphs) by associating tensors
to the vertices of the graph.  

More precisely, we define the notion of \emph{Hypergraph Weighted
  Model} (HWM), a computational model that generically associates a tensor
network to a hypergraph and that computes a value by
successive generalized tensor contractions directed by its
hyperedges. We say that a series $r$ defined on a hypergraph family is
HWM-recognizable if there exists a HWM $M$ that computes it: we
then denote $r$ by $r_M$. We first show
that HWM-recognizable series defined on strings or trees exactly
recover the classical notion of recognizable series. We present two closure
properties: if $r$ and $s$ are two recognizable series defined on
a family $\mathcal{H}$ of connected hypergraphs, then $r+s$ and $r\cdot s$, respectively
defined for all graph $G \in \mathcal{H}$ by $(r+s)(G)=r(G)+s(G)$ and $(r\cdot s)(G)=r(G)s(G)$ (the
Hadamard product), are HWM-recognizable.

Recognizable series on strings and trees include polynomials, i.e. finite
support series. This is not always the case for recognizable
series defined on more general families of hypergraphs. For example, we show that finite support series are not
recognizable on the family of circular strings. The main
reason is that if a recognizable series is not null on
some hypergraph $G$, it must be also different from zero on
\emph{tilings} of $G$, i.e. connected graphs made of copies of $G$. We
show that if a graph family is tiling-free, then finite support series are recognizable. Strings and trees, as any family of
rooted hypergraphs, are tiling-free.

String rational series and weighted automaton have their roots in automata theory \cite{Eilenberg_1974,Schutzenberger_1961} and their study can be found in \cite{berstel1988rational,Droste_Kuich_Vogler_2009,Kuich_Salomaa_1986,sakarovitch2009,Salomaa_Soittola_Bauer_Gries_1978}. The extension of rational series and weighted automaton to trees is presented in 
\cite{Berstel82,Droste_Kuich_Vogler_2009}. Spectral methods for inference of stochastic languages of strings/trees have been developed upon the notion of linear representation of a rational series (\cite{Bailly:2009:GIP:1553374.1553379,Denis07} for example). Tensor networks emerged in the theory of brain functions \cite{Pellionisz_Llinas_1979}, they have been used in quantum  theory (see for example \cite{Orus:arXiv1306.2164}), and the interest for these objects has recently been growing in other fields (e.g. data mining \cite{ABSP0001sub2}).

We recall notions on tensors and hypergraphs in Section~\ref{sec:prel},
we introduce the Hypergraph Weighted
  Model and present some of its properties
  in section~\ref{sec:hwm}, we introduce the notion of tilings and we
  study the recognizability of finite support series in
  Section~\ref{sec:tilings}, we provide some examples in Section~\ref{sec_examples} and we then propose a short conclusion.

    Most of the proofs have been omitted for brevity  but can be found in \cite{bailly14}.

\vspace{-0.15cm}
\section{Preliminaries}\label{sec:prel}

	\subsection{Recognizable Series on Strings and Trees}
		We refer to \cite{Berstel82,berstel1988rational,tata,Droste_Kuich_Vogler_2009,sakarovitch2009} for notions about recognizable series on strings and trees, and we briefly recall below some basic definitions.

Let $\Sigma$ be a finite \emph{alphabet}, and $\Sigma^{*}$ be the set of strings on $\Sigma$. A \emph{series} on $\Sigma^*$ is a mapping $r: \Sigma^*\rightarrow \K$ where $\K$ is a semiring.  A series $r$ is \emph{recognizable}  if there exists a tuple $\langle V,\vs{\iota}, \{\v{M}_x\}_{x\in \Sigma}, \vs{\tau}\rangle$ where $V=\K^d$ for some integer $d\geq 1$, $\vs{\iota}, \vs{\tau} \in V$ and $\v{M}_x\in \K^{d\times d}$ for each symbol  $x\in \Sigma$ , such that for any $u_1\dots u_n\in\Sigma^*$, $r(u_1\dots u_n)=\vs{\iota}^{\top}\v{M}_{u_1}\dots \v{M}_{u_n}\vs{\tau}$. In this paper, we will only consider the case where $\K = \R$ or $\C$.

A \emph{ranked} alphabet ${\cal F}$ is a tuple $(\Sigma,\sharp)$ where $\Sigma$ is a finite alphabet and where $\sharp$ maps each symbol $x$ of $\Sigma$ to an integer $\sharp x$ called its \emph{arity}; for any $k\in \N$, let us denote ${\cal F}_k = \sharp^{-1}(\{k\})$. A ranked alphabet is \emph{positive} if $\sharp$ takes its values in $\N_+$.

The set of trees over a ranked alphabet ${\cal F}$ is denoted by $T({\cal F})$. A \emph{tree series} on $T({\cal F})$ is
  a mapping $r:T({\cal F}) \to \K.$ A series $r$ is \emph{recognizable} 
  if there exists a tuple $\langle V,\mu,\vs{\lambda} \rangle$,
  where $V=\K^d$ for some integer $d\geq 1$,  $\mu$ maps each  $f\in {\cal F}_p$ to a $p$-multilinear mapping $\mu(f)\in\mathcal{L}(V^p;V)$ for each $p \geq 0$  and $\vs{\lambda}\in V$, such that $r(t) = \vs{\lambda}^{\top}\mu(t)$ for all $t$ in $T(\mathcal{F})$, where $\mu(t)\in V$ is inductively defined by $\mu(f(t_1, \dots, t_p) ) = \mu(f)(\mu(t_1), \dots, \mu(t_p))$.


	\subsection{Tensors}
		Let $d\geq 1$ be an integer, $V=\K^d$ where $\K=\R$ or $\C$ and let $(\v{e}_{1}, \dots, \v{e}_{d})$ be the canonical basis of $V$. A tensor $\e{T}\in \bigotimes^k V = V \otimes \cdots \otimes V$ ($k$ times) can uniquely be expressed as a linear combination 
$$ \e{T}=\sum_{i_1, \dots ,i_k \in [d]} \e{T}_{i_1 \dots i_{k}} \v{e}_{i_1} \otimes \dots  \otimes \v{e}_{i_k}$$
(where $[d] = \{1, \cdots, d\}$) of \emph{pure tensors} $\v{e}_{i_1} \otimes \dots  \otimes \v{e}_{i_k}$ which form a basis of $\bigotimes^k V$ \cite{bookTensor}. Hence, the tensor $\e{T}$ can be represented as the multi-array $(\e{T}_{i_1 \dots i_{k}})$.

\begin{definition}
The \emph{tensor product} of $\e{T}\in \bigotimes^{p} V$ and $\e{U}\in \bigotimes^{q} V$ is the tensor $\e{T}\otimes \e{U}\in \bigotimes^{p+q} V$ defined by $$(\e{T}\otimes \e{U})_{i_1 \cdots i_p j_1 \cdots j_q}=\e{T}_{i_1 \cdots i_p} \e{U}_{j_1 \cdots j_q}.$$
\end{definition}
For any $\v{v}\in\K^d$, let $\v{v}^{\otimes k} = \v{v}\otimes \cdots \otimes \v{v}=\sum_{i_1, \dots ,i_k \in [d]} v_{i_1}\dots v_{i_k}\v{e}_{i_1} \otimes \dots  \otimes \v{e}_{i_k}$ denote its $k$-th tensor power.

Let $\odot: V \times V \to V$ be an associative and symmetric bilinear mapping: $\forall u,v,w \in V, u \odot v=v\odot u$ and $ u \odot ( v \odot w)= (u \odot v) \odot w$. The mapping $\odot$ is called a \emph{product}.
\begin{remark}
\label{rmk_contraction}
Let $\mathbf{1}=(1, \dots, 1)^{\top}$ and let $\odot_{id}$ be defined by $\v{e}_i \odot_{id} \v{e}_j= \delta_{ij}\v{e}_i$, where $\delta$ is the Kronecker symbol: $\odot_{id}$ is called the \emph{identity product}. 

The operation of applying the linear form $\v{v} \mapsto \mathbf{1}^\top \v{v}$ to the identity product  $\v{a} \odot_{id} \v{b}$ of two vectors is related to the notions of \emph{generalized trace} and \emph{contraction}: if $\e{A} = \sum_{i,j \in [d]} \e{A}_{i,j} \v{e}_i \otimes \v{e}_j$ is a 2-order tensor over $\K^d$ (i.e. a square matrix), $\v{v} = \sum_{i,j\in [d]} \e{A}_{i,j} \v{e}_i \odot_{id} \v{e}_j$ is the diagonal  vector of $\e{A}$ and $ \mathbf{1}^\top \v{v}$ is its trace. Furthermore, if  $\e{A} = \sum_{i,j \in [d]} \e{A}_{i,j} \v{e}_i \otimes \v{e}_j$ and $\e{B} = \sum_{i,j \in [d]} \e{B}_{i,j} \v{e}_i \otimes \v{e}_j$ are $2$-order tensors over $\K^d$, then $  \sum_{i,j,k,l} \e{A}_{i,j} \e{B}_{k,l} \v{e}_i \otimes \mathbf{1}^\top (\v{e}_j \odot_{id} \v{e}_k) \otimes \v{e}_l = \sum_{i,j,l} \e{A}_{i,j} \e{B}_{j,l} \v{e}_i \otimes \v{e}_l$ is the tensor form of the matrix product $\e{A} \cdot \e{B}$ (i.e. the contraction of the tensor $\e{A}\otimes \e{B}$ along its 2nd and 3rd modes).
\end{remark}


	\subsection{Hypergraphs}

\begin{definition}
A hypergraph $G=(V,E,l)$ over a positive ranked alphabet $(\Sigma,\sharp)$ is given by a non empty finite set $V$, a mapping $l: V\to \Sigma$ and a partition $E=(h_k)_{1\leq k\leq n_E}$ of $P_G =\{(v,j): v\in V, 1\leq j\leq \sharp v\}$ where $\sharp v = \sharp l(v)$. 
\end{definition}

$V$ is the set of \emph{vertices}, $P_G$ is the set of \emph{ports} and $E$ is the
set of \emph{hyperedges} of $G$. The arity of a symbol $x$ is equal to to the number of ports of any vertex labelled by $x$. 
We will sometimes use the notation $v^{(i)}$ for
the port $(v,i) \in P_G$. A hypergraph $G$ can be represented as a bipartite graph where vertices from one partite set represent
the vertices of $G$, and vertices from the other represent its hyperedges (see Figure~\ref{ex-1}).  
A hypergraph is \emph{connected} if for any partition
$V=V_1\cup V_2$, there exists a hyperedge $h\in E$ and
ports $v_1^{(i)},v_2^{(j)}\in h$ s.t. $v_1\in V_1$ and $v_2\in V_2$. A hypergraph is a \emph{graph} if $|h| \leq 2$ for all $h\in E$, and a hypergraph is \emph{closed} if $|h| \geq 2$ for all $h\in E$. 

\begin{example}\label{ex-1}
Over the ranked alphabet $\{(a,3),(b,2)\}$, let $V=\{v_1,v_2,v_3\}$, $l(v_1)=l(v_3)=a$, $l(v_2)=b$, $E=\{h_1, h_2, h_3, h_4\}$ where $h_1=\{v_1^{(1)}, v_3^{(3)}\}$, $h_2=\{v_1^{(2)},v_2^{(1)},v_3^{(2)}\}$, $h_3=\{v_1^{(3)},v_2^{(2)}\}$ and $h_4=\{v_3^{(1)}\}$ (see Figure~\ref{ex-1}).   
\begin{figure}
\begin{center}

\vspace{-0.5cm}
\includegraphics[scale=1]{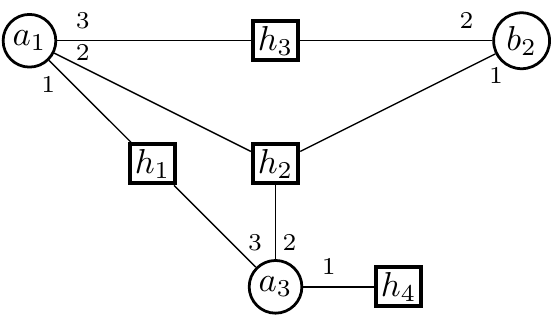}
\end{center}
\vspace{-0.25cm}
\label{fig-1}\caption{The hypergraph $G$ from example~\ref{ex-1}.}
\end{figure}

\end{example}

\begin{example}\label{def_hypergraphs_stringEx}A string $u=u_1\dots u_n$ over an alphabet $\Sigma$ can be seen as a (hyper)graph over the ranked alphabet $(\Sigma \cup \{\iota, \tau\}, \sharp)$ where $\sharp x=2$ for any $x\in \Sigma$ and $\sharp \iota = \sharp \tau = 1$. Let $V=\{0,\cdots, n+1\}$, $l(0) = \iota$, $l(n+1) = \tau$ and $l(i)=u_{i}$ for $1\leq i\leq n$. Let $E=\{h_0, h_1, \dots, h_{n}\}$ where $h_0=\{(0,1),(1,1)\}$ and $h_i=\{(i,2),(i+1,1)\}$ for $1\leq i\leq n$ (see Figure~\ref{fig_string}). The set of strings $\Sigma^*$ gives rise to a family of hypergraphs. 
\begin{figure}
\label{fig_string}
\begin{center}

\includegraphics[scale=1]{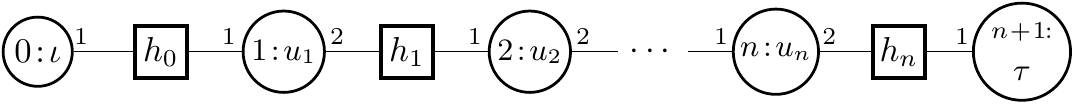}
\end{center}
\vspace{-0.25cm}
\caption{Graph associated with a string $u = u_1\cdots u_n$ (where the notation $i:x$ means that $\ell(i)=x$)}
\end{figure}

\end{example}

\begin{example}\label{def_hypergraphs_treeEx}
Similarly, we can associate any tree $t$ over a ranked alphabet $(\Sigma, \sharp)$ with a graph $G_t$ on the ranked alphabet $(\Sigma \cup \{\lambda\}, \sharp')$ where $\sharp '(f) = \sharp f + 1$ for any $f \in \Sigma$, and where the special symbol $\lambda$ of arity 1 is connected to the free port of the vertex corresponding to the root of $t$. 

Formally, let ${\cal F}=(\Sigma,\sharp)$ be a ranked alphabet. A tree $t$ over ${\cal F}$ can be defined as a mapping  from a finite non-empty prefix-closed set $Pos(t)\subseteq \N^*$ to ${\cal F}$, satisfying the following conditions: (i) $\forall p \in Pos(t)$, if $t(p) \in {\cal F}_n,n \geq 1$, then $\{j | p\cdot j\in Pos(t)\} = \{1,...,n\}$, (ii) $\forall p\in Pos(t)$, if $t(p)\in {\cal F}_0$, then $\{j|p\cdot j\in Pos(t)\}=\emptyset$.

A tree $t$ over ${\cal F}$  can be seen as a hypergraph over the ranked alphabet $(\Sigma \cup \{\lambda\}, \sharp ')$ where $\sharp '(\lambda) = 1$ and $\sharp '(f)=\sharp f +1$ for any $f\in \Sigma$. Let $V=Pos(t) \cup \{0\}$, $l(0) = \lambda$ and $l(p)=t(p)$ for any $p\in Pos(t)$. Let $E=\left\{\{(0,1),(\varepsilon,1)\}\right\}\cup \bigcup_{p.j\in Pos(t)} \left\{\{(p,j+1),(p.j,1)\}\right\}$. The set of trees $T({\cal F})$ gives rise to a family of hypergraphs. The graph associated with the tree $t=f(a,f(a,a))$ is shown as an example in Figure~\ref{fig_circ_trees}.
\comment{
For example, let ${\cal F}=\{f(\cdot,\cdot),a\}$ and let $t=f(a,f(a,a))$:\
\begin{itemize}
\item $V=\{0,\varepsilon, 1, 2, 2.1, 2.2\}$,
\item $l(0) = \lambda$,
  $l(\varepsilon)=l(2)=f$ and $l(1)=l(2.1)=l(2.2)=a$
\item 
  $E=\{\{(0,1),(\varepsilon,1)\},\{(\varepsilon,2),(1,1)\},\{(\varepsilon,3),(2,1)\},\{(2,2),(2.1,1)\},\{(2,3),(2.2,1)\}\}.$
\end{itemize}
}
\end{example}
\begin{example}
\label{ex_circular_strings}
Given a finite alphabet $\Sigma$, let ${\cal F} = (\Sigma, \sharp)$ be the ranked alphabet where $\sharp x = 2$ for each $x\in \Sigma$. We say that a hypergraph $G = (V,E)$ on $\mathcal{F}$ is a circular string if and only if  $G$ is connected and every hyperedge $h\in E$ is of the form $h = \{(v, 2), (w,1)\}$ for $v,w \in V$ (see Figure~\ref{fig_circ_trees}). 
\end{example}

\begin{example}
An other interesting extension of strings (naturally modeled by graphs) is the set of 2D-words $w\in \Sigma^{M\times N}$ on a finite alphabet $\Sigma$, see Section~\ref{sec_crosswords} for details.
\end{example}

\begin{figure}
\vspace{-0.5cm}
\centering
\begin{minipage}{.5\textwidth}
 \begin{center}
\includegraphics[scale=1]{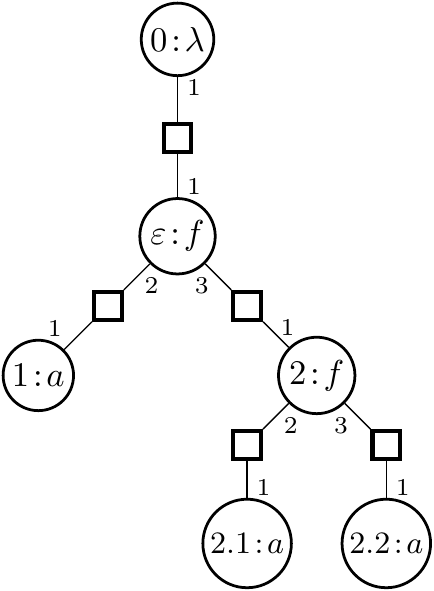}
\end{center} 
\end{minipage}%
\begin{minipage}{.5\textwidth}
\begin{center}
\includegraphics[scale=1]{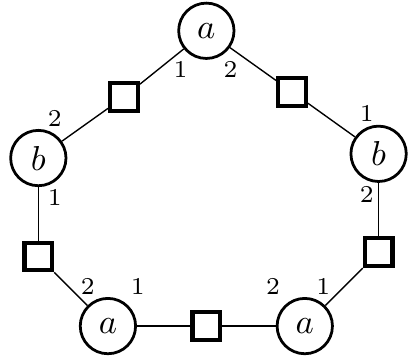}
\end{center} 
\end{minipage}
\caption{(left) Hypergraph $G_t$ associated with the tree $t = f(a,f(a,a))$. (right) Example of circular string on the alphabet $\{a,b\}$}
\label{fig_circ_trees}
\end{figure}


\section{Hypergraph Weighted Models}\label{sec:hwm}

	\subsection{Definition}
		In this section, we give the formal definition of Hypergraph Weighted Models. We then explain how to compute its value for a given hypergraph.

\begin{definition}
\label{def_hwm}
A rank $d$ Hypergraph Weighted Model (HWM) on a ranked alphabet $(\Sigma,\sharp)$ is a tuple $M=\langle V_M, \{\e{T}^x\}_{x\in \Sigma}, \odot, \vs{\alpha} \rangle$ where $V_M=\K^d$, $\odot$ is a product on $V_M$, $\vs{\alpha} \in V_M$, and $\{\e{T}^x\}_{x\in \Sigma}$ is a family of tensors where each $\e{T}^x\in \bigotimes^{\sharp x} V_M$. 

Let $G = (V,E,l)$ be a hypergraph and let $\Gamma=[d]^{P_G}$ be the set of mappings from $P_G$ to $[d]$. The \emph{series} $r_M$ computed by the HWM $M$ is defined by $$r_M(G)=\sum_{\gamma\in\Gamma}\e{T}_{\gamma}\prod_{h\in E}\vs{\alpha}^{\top}\bigodot_{i\in\gamma(h)}\v{e}_i$$ where 
$\e{T}_{\gamma}=\prod_{v\in V}\e{T}^v_{\gamma(v^{(1)})\dots\gamma(v^{(\sharp v)})}$ (using the notation $\e{T}^v = \e{T}^{l(v)}$).




\end{definition}

Let 
$V = \{v_1, \cdots, v_n\}$. The tensor $\e{T}^{v_1} \otimes
\e{T}^{v_2} \otimes \cdots \otimes \e{T}^{v_n}$ is of order $|P_G|$,
and any element $\gamma \in \Gamma$ can be seen as a multi-index of
$[d]^{|P_G|}$. Thus,
$\e{T}_\gamma$ is the $\left(\gamma(v_1^{(1)}), \cdots,  \gamma(v_1^{(\sharp
  v_1)}), \cdots, \gamma(v_n^{(1)}), \cdots,  \gamma(v_n^{(\sharp
  v_n)})\right)$-coordinate of the tensor $\bigotimes_{i=1}^n  \e{T}^{v_i}$.

\begin{example}
\label{def_hwm_ex1}
Consider the hypergraph $G$ from Example~\ref{ex-1}. We have

$$r_M(G) = \sum_{i_1,\cdots,i_8}\e{T}^a_{i_1i_2i_3}\e{T}^b_{i_4i_5}\e{T}^a_{i_6i_7i_8}\vs{\alpha}^{\top}(\v{e}_{i_1}\odot \v{e}_{i_8}) \vs{\alpha}^{\top}(\v{e}_{i_2}\odot \v{e}_{i_4}\odot \v{e}_{i_7}) \vs{\alpha}^{\top}(\v{e}_{i_3}\odot \v{e}_{i_5}) \vs{\alpha}^{\top}\v{e}_{i_6}.$$
\end{example}



\begin{remark}
If $\odot=\odot_{id}$ and if $\vs{\alpha}=\mathbf{1}$, then $r_M(G)=\sum_{\gamma\in\Gamma_{Id}}\e{T}_{\gamma}$ 
where  $\Gamma_{Id}=\{\gamma\in\Gamma: \forall h\in E, p,q\in h\Rightarrow \gamma(p)=\gamma(q)\}$. For the hypergraph $G$ from Example~\ref{ex-1}, this would lead to the following contractions of the tensor $\e{T}^a \otimes \e{T}^b \otimes \e{T}^a$:
\begin{align*}
r_M(G) &= \sum_{i_1,i_2,i_3,i_6}\e{T}^a_{i_1i_2i_3}\e{T}^b_{i_2i_3}\e{T}^a_{i_6i_2i_1} 
\end{align*}
\end{remark}

\begin{remark}
\label{rmk_circular_strings}
Let $\Sigma$ be a finite alphabet, let $\v{M}_\sigma \in
\K^{d\times d}$ for $\sigma \in \Sigma$ and let $A = \langle \K^d,
\{\v{M}_\sigma\}_{\sigma \in \Sigma}, \odot_{id}, \v{1}\rangle$ be a HWM. For
any non empty word $w = w_1 \cdots w_n \in \Sigma^*$ and its corresponding circular string $G_w$, it follows from Remark~\ref{rmk_contraction} that $r_A( G_w ) = Tr(\v{M}_{w_1} \cdots \v{M}_{w_n})$ (where $Tr(\v{M})$ is the trace of the matrix $\v{M}$). 
\end{remark}

\begin{remark}
\label{rmk_pure_tensors}
Let $A = \langle \R^d, \{\e{A}^x\}_{x\in \Sigma}, \odot, \vs{\alpha} \rangle$ be a HWM. Each tensor $\e{A}^x$ can be decomposed as a sum of rank one tensors $\e{A}^x = \sum_{r=1}^R \v{a}_r^{(x,1)}\otimes \cdots \otimes \v{a}_r^{(x,\sharp x)}$ where $R$ is the maximum rank of the tensors $\e{T}^x$ for $x \in \Sigma$. The computation of the HWM $A$ on $G =(V,E,\ell)$ can then be written as  
$ r(G) = \prod_{h\in E} \vs{\alpha} ^\top \left[ \bigodot_{(v,i)\in h} \left(\sum_{r=1}^R \v{a}_r^{(\ell(v), i)}\right) \right].$
\end{remark}

\begin{remark}
\label{prop_connected_components}
If $G$ is a hypergraph with two connected components $G_1$ and $G_2$, we have $r_M(G) = r_M(G_1)\cdot r_M(G_2)$ for any HWM $M$.
\end{remark}

\begin{definition}
Let $\mathcal{H}$ be a family of hypergraphs on a ranked alphabet $(\Sigma, \sharp)$.
We say that a hypergraph series $r: \mathcal{H} \to \K$ is recognizable if and only if there exists a HWM $M$ such that $r_M(G) = r(G)$ for all  $G \in \mathcal{H}$.
\end{definition}



		\subsection{Properties}
			In this section, we show that HWMs satisfy some basic properties which are desirable for a model extending the notion of recognizable series to hypergraphs.

The following propositions show that the proposed model naturally generalizes the notion of linear representation of recognizable series on strings and trees.

\begin{proposition}
Let $r = \langle V, \vs{\iota}, \{\v{M}^\sigma\}_{\sigma\in\Sigma}, \vs{\tau} \rangle$ be a recognizable series on $\Sigma^*$. For any word $w \in \Sigma^*$, let $G_w$ be the associated hypergraph on the ranked alphabet $(\Sigma\cup \{\iota,\tau\}, \sharp)$, whose construction is described in Example~\ref{def_hypergraphs_stringEx}. Consider the HWM $M = \langle V, \{\e{T}^x\}_{x\in \Sigma \cup \{\iota, \tau\}}, \odot_{id}, \v{1}\rangle$ where $\e{T}^\tau = \vs{\tau}$, $\e{T}^\iota = \vs{\iota}$ and $\e{T}^\sigma = \v{M}^\sigma$ for all $\sigma \in \Sigma$.

Then, $r(w) = r_M(G_w)$ for all strings $w\in \Sigma^*$.

\end{proposition}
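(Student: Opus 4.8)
The plan is to exploit the special structure of the HWM $M$. Since $M$ uses the identity product $\odot_{id}$ and weighting vector $\vs{\alpha} = \v{1}$, the Remark following Example~\ref{def_hwm_ex1} reduces its computation to $r_M(G_w) = \sum_{\gamma \in \Gamma_{Id}} \e{T}_\gamma$, where $\Gamma_{Id}$ collects exactly the index assignments that are constant on each hyperedge. Because every hyperedge of $G_w$ has exactly two ports, this constraint simply forces the two ports sharing a hyperedge to receive the same index.

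First I would reindex the sum by hyperedges rather than by ports. Writing $w = u_1 \cdots u_n$, the hyperedges $h_0, \dots, h_n$ each carry a single free index; call them $k_0, \dots, k_n \in [d]$. The port-to-edge incidence given in Example~\ref{def_hypergraphs_stringEx} then reads off directly: $\gamma(0,1) = k_0$; for $1 \le i \le n$, port $(i,1)$ lies in $h_{i-1}$ and port $(i,2)$ lies in $h_i$, so $\gamma(i,1) = k_{i-1}$ and $\gamma(i,2) = k_i$; and finally $\gamma(n+1,1) = k_n$.

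Next I would expand the vertex product $\e{T}_\gamma = \prod_{v \in V} \e{T}^v$ factor by factor, using the assignments $\e{T}^\iota = \vs{\iota}$, $\e{T}^{u_i} = \v{M}^{u_i}$, and $\e{T}^\tau = \vs{\tau}$. This yields $\e{T}_\gamma = \vs{\iota}_{k_0} \big( \prod_{i=1}^n (\v{M}^{u_i})_{k_{i-1}, k_i} \big) \vs{\tau}_{k_n}$, whence
$$r_M(G_w) = \sum_{k_0, \dots, k_n \in [d]} \vs{\iota}_{k_0} \Big( \prod_{i=1}^n (\v{M}^{u_i})_{k_{i-1}, k_i} \Big) \vs{\tau}_{k_n}.$$
Recognizing the right-hand side as a fully contracted chain of matrix products, it equals $\vs{\iota}^\top \v{M}^{u_1} \cdots \v{M}^{u_n} \vs{\tau} = r(w)$, which closes the argument. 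The same computation covers the empty word ($n = 0$), where the single edge $h_0$ forces $r_M(G_\varepsilon) = \sum_k \vs{\iota}_{k} \vs{\tau}_{k} = \vs{\iota}^\top \vs{\tau} = r(\varepsilon)$.

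There is no genuine difficulty here beyond careful bookkeeping; the only point requiring attention is the chaining of indices, namely verifying that consecutive matrices $\v{M}^{u_i}$ and $\v{M}^{u_{i+1}}$ share the summation index $k_i$ carried by the hyperedge joining vertices $i$ and $i+1$. This is precisely what the incidence structure of $G_w$ guarantees, so the matrix product collapses exactly as required.
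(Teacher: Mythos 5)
Your proof is correct and takes essentially the same route as the paper's: both amount to reducing the HWM computation under $\odot_{id}$ and $\vs{\alpha}=\mathbf{1}$ to the sum over hyperedge-constant index assignments and identifying the resulting index chain $\sum_{k_0,\dots,k_n}\vs{\iota}_{k_0}(\v{M}^{u_1})_{k_0,k_1}\cdots(\v{M}^{u_n})_{k_{n-1},k_n}\vs{\tau}_{k_n}$ with the matrix product $\vs{\iota}^\top\v{M}^{u_1}\cdots\v{M}^{u_n}\vs{\tau}$. The only differences are cosmetic: the paper runs the computation in the opposite direction (expanding $r(w)$ and recognizing $r_M(G_w)$), whereas you start from $r_M(G_w)$, make the $\Gamma_{Id}$ reduction and port-to-edge bookkeeping explicit, and check the empty-word case, which the paper leaves implicit.
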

\begin{proof}
Let $w = w_1\cdots w_n$. We have
\begin{align*}
r(w)&=\vs{\iota}^\top  \v{M}^{w_1} \cdots\v{M}^{w_n} \vs{\tau}=\sum_{i_0, \dots, i_n}\vs{\iota}_{i_0}\v{M}^{w_1}_{i_0,i_1}\dots \v{M}^{w_1}_{i_{n-1},i_n}\vs{\tau}_{i_n}\\
&=\sum_{i_0, \dots, i_n}\e{T}^\iota_{i_0}\e{T}^{w_1}_{i_0,i_1}\dots \e{T}^{w_n}_{i_{n-1},i_n}\e{T}^\tau_{i_n}=r_M(G_w). 
\end{align*}
\end{proof}

In the previous proposition, the vectors  $\vs{\iota}$ and $\vs{\tau}$  of a linear representation were directly encoded in the structure of the graph representation of a string $w$ on $\Sigma$ using the new symbols $\iota$ and $\tau$.  The next proposition shows that it is possible to encode these linear forms in the vector $\vs{\alpha}$ of a HWM with complex coefficients, using a graph representation of strings without new symbols: for any string $w = w_1 \cdots w_n$ over $\Sigma$, we consider the graph $H_w = (V,E,\ell)$ on $(\Sigma, \sharp)$ where $V = [n]$, $\ell(i) = w_i$ and the set of hyperedges is composed of $\{(1,1)\}$, $\{(n,2)\}$ and $\{(i, 2),(i+1,1)\}$ for $i \in [n-1]$ (note that the graph representation of a string is different from the graph representation of its mirror because of the identification of the ports).

\begin{proposition}
\label{prop_iota_eq_tau}

Let $r = \langle \R^d, \vs{\iota}, \{\v{M}_\sigma\}_{\sigma \in \Sigma}, \vs{\tau}\rangle$  be a recognizable string series on $\Sigma^*$. There exists a HWM $M=\langle \C^d, \{\e{T}^\sigma\}_{\sigma\in \Sigma}, \vs{\alpha}, \odot \rangle$ such that $r_M(H_w) = r(w)$ for all $w\in \Sigma^*$.
\end{proposition}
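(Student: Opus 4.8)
The first step is to unfold the definition of $r_M$ on the chain graph $H_w$. Writing $\gamma((i,1)) = a_i$ and $\gamma((i,2)) = b_i$, each vertex $i$ contributes the entry $\e{T}^{w_i}_{a_i b_i}$ of the matrix $\e{T}^{w_i}$, the two single-port hyperedges $\{(1,1)\}$ and $\{(n,2)\}$ contribute $\vs{\alpha}^\top \v{e}_{a_1} = \alpha_{a_1}$ and $\alpha_{b_n}$, and each internal hyperedge $\{(i,2),(i+1,1)\}$ contributes $\vs{\alpha}^\top(\v{e}_{b_i}\odot\v{e}_{a_{i+1}})$. Introducing the connector matrix $C$ with $C_{pq} = \vs{\alpha}^\top(\v{e}_p\odot\v{e}_q)$ --- which is symmetric because $\odot$ is --- and summing over all indices, this collapses to
$$r_M(H_w) = \vs{\alpha}^\top \e{T}^{w_1} C \e{T}^{w_2} C \cdots C \e{T}^{w_n}\vs{\alpha}.$$
The crux of the difficulty is already visible: the same vector $\vs{\alpha}$ sits at both ends, whereas the target $r(w) = \vs{\iota}^\top \v{M}_{w_1}\cdots\v{M}_{w_n}\vs{\tau}$ generally has $\vs{\iota}\neq\vs{\tau}$, so the boundary asymmetry must be produced by the interior of the chain, not by the two ends.

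My plan is to break this symmetry through the choice of the tensors together with a preliminary change of basis on the linear representation. I would take $\odot = \odot_{id}$, so that by Remark~\ref{rmk_contraction} the connector becomes the diagonal matrix $C = D := \mathrm{diag}(\vs{\alpha})$. Assuming $r\not\equiv 0$ (the null series being computed by zero tensors), we have $\vs{\iota},\vs{\tau}\neq \v{0}$; I first pick an invertible $P$ with $P\mathbf{1} = \vs{\tau}$ and pass to the equivalent representation $\vs{\iota}' = P^\top\vs{\iota}$, $\v{M}'_\sigma = P^{-1}\v{M}_\sigma P$, $\vs{\tau}' = P^{-1}\vs{\tau} = \mathbf{1}$, which computes the same series. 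I then set $\vs{\alpha} = \vs{\iota}'$, $D = \mathrm{diag}(\vs{\iota}')$ and $\e{T}^\sigma = \v{M}'_\sigma D^{-1}$. Substituting into the displayed formula, each internal pair $D^{-1}D$ cancels and the interior connectors telescope, leaving
$$r_M(H_w) = \vs{\iota}'^\top \v{M}'_{w_1}\cdots\v{M}'_{w_n}\, D^{-1}\vs{\iota}' = \vs{\iota}'^\top \v{M}'_{w_1}\cdots\v{M}'_{w_n}\,\mathbf{1},$$
since $D^{-1}\vs{\iota}' = \mathbf{1}$; as $\mathbf{1} = \vs{\tau}'$ this equals $r(w)$.

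The step that needs genuine care --- and the reason working over $\C$ is convenient --- is the simultaneous requirement that $D = \mathrm{diag}(\vs{\alpha})$ be invertible, i.e. that $\vs{\alpha} = P^\top\vs{\iota}$ have no vanishing coordinate, while still satisfying $P\mathbf{1} = \vs{\tau}$. I would argue this by genericity: the matrices $P$ with $P\mathbf{1} = \vs{\tau}$ form an affine subspace, the invertible ones are Zariski-dense in it, and for each coordinate $k$ the equation $(P^\top\vs{\iota})_k = 0$ involves only the $k$-th column of $P$ and so cuts out a proper closed subset (using $\vs{\iota}\neq\v{0}$ together with the freedom to adjust another column without changing the column sum, for $d\ge 2$; the case $d=1$ is direct). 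The finitely many bad loci therefore cannot cover the whole space, so an admissible $P$ exists, and over $\C$ this is immediate. Finally I would check the boundary case $n=1$ (no internal connector: $\vs{\alpha}^\top\e{T}^{w_1}\vs{\alpha} = \vs{\iota}'^\top\v{M}'_{w_1}\mathbf{1}$) and note that only words of length $\geq 1$ need be treated, since the empty word corresponds to no valid hypergraph (vertex sets are non-empty). The invertibility/genericity step is the only real obstacle; everything else is the telescoping identity and a routine change of basis.
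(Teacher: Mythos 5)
Your proof is correct, and it takes a genuinely different route from the paper's, even though both hinge on the same obstacle you identify (one vector $\vs{\alpha}$ at both ends versus $\vs{\iota}\neq\vs{\tau}$) and both resolve it by a per-coordinate diagonal rescaling. The paper first \emph{symmetrizes the linear representation}: choosing a basis in which every coordinate of $\vs{\iota}$ and of $\vs{\tau}$ is nonzero, it conjugates by the diagonal matrix $\v{D}$ with $\v{D}_{ii}=(\v{e}_i^\top\vs{\tau})^{1/2}/(\v{e}_i^\top\vs{\iota})^{1/2}$, so that $\v{D}^\top\vs{\iota}=\v{D}^{-1}\vs{\tau}=:\vs{\alpha}$ --- this square root is precisely where $\C$ is genuinely needed, since $(\v{e}_i^\top\vs{\iota})(\v{e}_i^\top\vs{\tau})$ may be negative. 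It then keeps the tensors equal to the conjugated transition matrices (the paper writes $\e{T}^\sigma=\v{M}^\sigma$, but $\v{D}^{-1}\v{M}^\sigma\v{D}$ is clearly intended) and instead \emph{modifies the product}, setting $\v{e}_i\odot\v{e}_j=\delta_{ij}\frac{1}{\vs{\alpha}_i}\v{e}_i$, so that your connector matrix $C$ becomes exactly the identity and the chain collapses to $\vs{\alpha}^\top\v{N}^{w_1}\cdots\v{N}^{w_n}\vs{\alpha}=r(w)$. You go the opposite way: keep the canonical product $\odot_{id}$ (so $C=\mathrm{diag}(\vs{\alpha})$) and absorb $\mathrm{diag}(\vs{\alpha})^{-1}$ into the tensors, after normalizing $\vs{\tau}'=\mathbf{1}$ by a change of basis so that the one uncancelled $D^{-1}$ at the right end converts $\vs{\alpha}=\vs{\iota}'$ into $\vs{\tau}'$. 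Each route needs a genericity step: the paper's choice of basis with no vanishing coordinate of $\vs{\iota}$ or $\vs{\tau}$ is asserted without proof (and tacitly assumes $r\not\equiv 0$), whereas you argue the existence of an invertible $P$ with $P\mathbf{1}=\vs{\tau}$ and $(P^\top\vs{\iota})_k\neq 0$ carefully, including the null series and $d=1$ cases. What your version buys: despite your framing, it never actually leaves $\R$ (your Zariski argument works over any infinite field, $\R$ included), so you prove the slightly stronger statement that a real HWM with the standard identity product suffices. What the paper's version buys: the intermediate step --- every recognizable series admits an equivalent representation with \emph{equal} initial and final vectors --- is a clean reusable lemma, and making the connector the identity keeps all contraction work inside the product $\odot$, in the spirit of Remark~\ref{rmk_contraction}.
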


\begin{proof}
We first show that given a recognizable string series $r=\langle \R^d, \vs{\iota}, \{\v{M}^\sigma\}_{\sigma \in \Sigma}, \vs{\tau}\rangle$ there exists a recognizable series $s=\langle \C^d, \vs{\alpha}, \{\v{N}^\sigma\}_{\sigma \in \Sigma}, \vs{\alpha}\rangle$ such that $s(w) = r(w)$ for all $w \in \Sigma^*$. Indeed, let $(\v{e}_1, \cdots, \v{e}_d)$ be a basis of $\R^d$ such that $\v{e}_i^\top \vs{\tau} \not= 0$ and  $\v{e}_i^\top \vs{\iota} \not= 0$ for all $i \in [d]$. Let $\v{D}\in \C^{d\times d}$ be the diagonal matrix defined by $\v{D}_{ii} = (\v{e}_i^\top \vs{\tau})^{1/2} /  (\v{e}_i^\top \vs{\iota})^{1/2}$. We have $\v{D}^\top \vs{\iota} = \v{D}^{-1} \vs{\tau}$ and the series $s:  \langle \C^d, \v{D}^\top\vs{\iota}, \v{D}^{-1} \vs{\tau}, \{\v{D}^{-1}\v{M}^\sigma\v{D}\}_{\sigma \in \Sigma}\rangle$ is such that $s(w) = r(w)$ for all $w \in \Sigma^*$.

We then have that the HWM $M=\langle \C^d, \{\e{T}^x\}_{x\in \Sigma}, \vs{\alpha}, \odot\rangle$, where $\e{T}^\sigma = \v{M}^\sigma$, $\vs{\alpha} =  \v{D}^\top \vs{\iota} = \v{D}^{-1} \vs{\tau}$ and $\odot$ is defined by $\v{e}_i \odot \v{e}_j = \delta_{ij} \frac{1}{\vs{\alpha}_i} \v{e}_i$, is such that $r_M(H_w) = r(w)$ for all $w\in \Sigma^*$.
\end{proof}

\begin{proposition}
Let $r= \langle V, \mu, \lambda\rangle$ be a recognizable series on trees on the ranked alphabet $\mathcal{F} =(\Sigma, \sharp)$. For any tree $t$ over  $\mathcal{F}$, let $G^t = (V_t,E_t)$ be the associated hypergraph on the ranked alphabet $(\Sigma\cup\{\lambda\},\sharp ')$, whose construction is described in Example~\ref{def_hypergraphs_treeEx}.
Consider the HWM $M: \langle V, \{\e{T}^x\}_{x\in \Sigma \cup \{\lambda\}}, \odot_{id}, \v{1}\rangle$ where $\e{T}^\lambda = \lambda$ and $\e{T}^f$ is defined by $\e{T}^f_{i_0\dots i_k}=\v{e}_{i_0}^{\top}\mu(f)(\v{e}_{i_1},\dots,\v{e}_{i_k})$ for all $k$ and  $f \in \mathcal{F}_k$. 

Then, $r(t) = r_M(G^t)$ for all tree $t$ over $\mathcal{F}$.
\end{proposition}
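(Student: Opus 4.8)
The plan is to establish the identity by structural induction on the tree $t$, mirroring the inductive definition of $\mu(t)$. First I would invoke the remark stating that, since $M$ uses the identity product $\odot_{id}$ and $\vs{\alpha}=\v{1}$, we have $r_M(G^t)=\sum_{\gamma\in\Gamma_{Id}}\e{T}_\gamma$, where $\Gamma_{Id}$ forces $\gamma$ to be constant on each hyperedge. Because $G^t$ is a graph (every hyperedge has exactly two ports), such a $\gamma$ amounts to choosing one index per edge of the tree: an index $i_p\in[d]$ on the edge joining port $(p,1)$ to its parent, or to the $\lambda$-vertex when $p=\varepsilon$. Using $\e{T}^f_{i_0\dots i_k}=\v{e}_{i_0}^{\top}\mu(f)(\v{e}_{i_1},\dots,\v{e}_{i_k})$ and $\e{T}^\lambda=\lambda$, the summand factorizes as $\e{T}_\gamma=\lambda_{i_\varepsilon}\prod_{p\in Pos(t)}\v{e}_{i_p}^{\top}\mu(t(p))(\v{e}_{i_{p.1}},\dots,\v{e}_{i_{p.k}})$, where $k=\sharp\,t(p)$ and, for each child index $j$, the edge $\{(p,j{+}1),(p.j,1)\}$ carries the common index $i_{p.j}$.

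Next I would introduce, for each position $p\in Pos(t)$ and each $i\in[d]$, the partial sum $S_p(i)$ obtained by contracting the part of $G^t$ corresponding to the subtree $t_p$ rooted at $p$, with the index on $p$'s port-$1$ edge held fixed to $i$. The core claim is that $S_p(i)=\mu(t_p)_i$ for all $p$ and $i$, proved by induction from the leaves upward. For a leaf $p$ with $t(p)=a\in\mathcal{F}_0$ (which becomes an arity-$1$ vertex in $G^t$, since $\sharp'(a)=1$), the $0$-multilinear map $\mu(a)$ is just a vector of $V$, so $S_p(i)=\e{T}^a_i=\v{e}_i^{\top}\mu(a)=\mu(a)_i=\mu(t_p)_i$, settling the base case.

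For the inductive step, let $p$ be labelled $f\in\mathcal{F}_k$ with children $p.1,\dots,p.k$. Expanding $\mu(t_p)=\mu(f)(\mu(t_{p.1}),\dots,\mu(t_{p.k}))$ coordinatewise and using the multilinearity of $\mu(f)$ to insert the basis expansions $\mu(t_{p.j})=\sum_{i_j}\mu(t_{p.j})_{i_j}\v{e}_{i_j}$, I obtain $\mu(t_p)_i=\sum_{i_1,\dots,i_k}\e{T}^f_{i\,i_1\dots i_k}\prod_{j=1}^k\mu(t_{p.j})_{i_j}$. Substituting the induction hypothesis $\mu(t_{p.j})_{i_j}=S_{p.j}(i_j)$ and recognizing that the resulting nested sum is exactly the definition of $S_p(i)$ closes the induction. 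Finally, contracting the root against the $\lambda$-vertex gives $r_M(G^t)=\sum_{i_\varepsilon}\lambda_{i_\varepsilon}S_\varepsilon(i_\varepsilon)=\sum_{i_\varepsilon}\lambda_{i_\varepsilon}\mu(t)_{i_\varepsilon}=\lambda^{\top}\mu(t)=r(t)$.

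The routine parts are the multilinearity expansion and the leaf base case; the step that demands the most care is the bookkeeping in the first paragraph, namely verifying that the identity-product constraint identifies precisely the indices along each tree edge, that the distinguished port $(p,1)$ consistently plays the role of the output index $i_0$ of $\e{T}^{t(p)}$ while ports $(p,2),\dots,(p,k{+}1)$ carry the children's output indices, and that the root's port-$1$ edge is the unique one contracted against $\e{T}^\lambda$. Getting this port-to-mode correspondence right is what makes the factorization of $\e{T}_\gamma$, and hence the whole induction, valid.
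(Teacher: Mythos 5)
Your proof is correct and takes essentially the same approach as the paper's: its key inductive claim $\mu(t)=\sum_{\gamma\in\Gamma_{Id}}\e{U}_{\gamma}\v{e}_{\gamma(\epsilon,1)}$ is exactly your coordinatewise statement $S_p(i)=\mu(t_p)_i$, established by the same factorization of the constrained index assignments $\Gamma_{Id}$ over subtrees and the same multilinearity expansion of $\mu(f)$, followed by the same final contraction with $\e{T}^{\lambda}=\lambda$ at the root. The only cosmetic differences are that you run the induction over positions of a fixed tree from the leaves upward rather than over trees, and you state the invariant per coordinate instead of as a vector identity.
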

\begin{proof}
For any $\gamma\in\Gamma_{Id}$, let $\e{U}_{\gamma}=\prod_{v \in V_t \setminus \{0\}}\e{T}^{l(v)}_{\gamma(v,1)\dots\gamma(v,\sharp v)}$. We first prove by induction on $t$ that $\mu(t)=\sum_{\gamma\in\Gamma_{Id}}\e{U}_{\gamma}\v{e}_{\gamma(\epsilon,1)}$.

If $t=a$, then $\sum_{\gamma\in\Gamma_{Id}}\e{U}_{\gamma}\v{e}_{\gamma(\epsilon,1)}=\sum_{i\in[d]}\e{T}^a_i\v{e}_i=\mu(a)$.

If $t=f(t_1, \dots, t_k)$, first remark that $\Gamma_{Id}=\Gamma_{Id}^{(0)}\times \Gamma_{Id}^{(1)}\times \dots \times \Gamma_{Id}^{(k)}$, where $\Gamma_{Id}^{(0)}$ fixes the values of the port $(\epsilon,1)$ and where $\Gamma_{Id}^{(j)}$, for $j\in[k]$, fixes the values of the ports of the subgraph corresponding to the subtree $t_j$. 

\begin{align*}
\mu(f(t_1, \dots, t_k))&=\mu(f)(\mu(t_1), \dots, \mu(t_k))\\
&=\sum_{i_0, i_1, \dots, i_k}\e{T}^f_{i_0, i_1, \dots, i_n}\prod_{j\in[k]}\v{e}_{i_j}^{\top}\mu(t_j) \v{e}_{i_0}\\
&=\sum_{i_0, i_1, \dots, i_k}\e{T}^f_{i_0, i_1, \dots, i_n}\prod_{j\in[k]}\v{e}_{i_j}^{\top}\left(\sum_{\gamma\in\Gamma^{(j)}_{Id}}\e{U}_{\gamma}\v{e}_{\gamma(j,1)}\right) \v{e}_{i_0}\\
&=\sum_{i_0, i_1, \dots, i_k}\e{T}^f_{i_0, i_1, \dots, i_n}\prod_{j\in[k]}\sum_{\gamma\in\Gamma^{(j)}_{Id},\gamma(j,1)=i_j}\e{U}_{\gamma} \v{e}_{i_0}\\
&=\sum_{i_0,\gamma_1\in\Gamma^{(1)}_{Id}, \dots, \gamma_k\in\Gamma^{(k)}_{Id}}\e{T}^f_{i_0, \gamma_1(1,1), \dots, \gamma_k(k,1)}\prod_{j\in[k]}\e{U}_{\gamma_j} \v{e}_{i_0} =\sum_{\gamma\in\Gamma_{Id}}\e{U}_{\gamma} \v{e}_{\gamma(\epsilon,1)}.
\end{align*}

It is then easy to check that $$\lambda(\mu(t))=\sum_{i}\sum_{\gamma\in\Gamma_{Id}}\e{T}^{\lambda}_{i}\e{U}_{\gamma}\v{e}_i^T \v{e}_{\gamma(\epsilon,1)}=\sum_{\gamma\in\Gamma_{Id}}\e{T}^{\lambda}_{\gamma(\epsilon,1)}\e{U}_{\gamma}=\sum_{\gamma\in\Gamma_{Id}}\e{T}_{\gamma}=r_M(t).$$
\end{proof}

The following propositions show that the set of HWMs is closed under addition and Hadamard product. 

\begin{proposition}
\label{stability_addition}
Let $A=\langle \K^m, \{\e{A}^x\}_{x\in \Sigma} , \odot_A, \vs{\alpha} \rangle$, and $B=\langle \K^n, \{\e{B}^x\}_{x\in \Sigma} , \odot_B, \vs{\beta} \rangle$ be two HWMs. Let $r_A$ (resp. $r_B$) be the series computed by $A$ (resp. by $B$).
 Define the HWM $C = \langle \K^{m+n}, \{\e{C}^x\}_{x\in\Sigma}, \odot, \tau \rangle$ by
\begin{itemize}
\item $ \e{C}^x_{i_1\dots i_{\sharp x}}= \left\{
    \begin{array}{ll}
      \e{A}^x_{i_1\dots i_{\sharp x}}&\textrm{ if }1\leq i_1, \dots, i_{\sharp x}\leq m\\
      \e{B}^x_{j_1\dots j_{\sharp x}}&\textrm{ if }m< i_1, \dots,
      i_{\sharp x}\leq m+n\textrm{ where }j_k=i_k-m\textrm{ for any }k\\
0&\textrm{otherwise,}
    \end{array}
\right.$
\item $\vs{\tau}_i = \vs{\alpha}_i$ if $1\leq i \leq m$ and $\vs{\beta}_{i-m}$ otherwise, and
\item $\v{e}_i \odot \v{e}_j = \begin{cases} \v{e}_i \odot_A \v{e}_j &\mbox{if } 1\leq i,j\leq m \\
    t_m(\v{e}_{i-m} \odot_B \v{e}_{j-m}) &\mbox{if } m< i,j\leq n \\
    0 & \mbox{otherwise } \end{cases}$
\end{itemize}
where $t_m:\K^n\rightarrow \K^{m+n}$ is the linear mapping defined by
$t_m(\v{e}_k)=\v{e}_{k+m}$ for any $1\leq k\leq n$.

Then the HWM $C$ computes the series $r_{A+B}$ defined by $r_{A+B}(G)=r_{A}(G)+r_{B}(G)$, for any connected hypergraph $G$.
\end{proposition}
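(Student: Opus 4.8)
The plan is to expand $r_C(G)$ directly from Definition~\ref{def_hwm} and argue that, thanks to connectedness, the only surviving terms are those in which every port is assigned a value entirely on the ``$A$-side'' $[m]$ or entirely on the ``$B$-side'' $\{m+1,\dots,m+n\}$; these two families of terms reassemble into $r_A(G)$ and $r_B(G)$ respectively. Write $\Gamma=[m+n]^{P_G}$ for the index set of $C$. The starting observation is that the block structure of $\e{C}$ and the product $\odot$ impose two independent ``no mixing'' constraints.

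First I would record these structural constraints. By construction $\e{C}^x_{i_1\dots i_{\sharp x}}$ vanishes unless $i_1,\dots,i_{\sharp x}$ all lie in $[m]$ or all lie in $\{m+1,\dots,m+n\}$; hence $\e{C}_\gamma=\prod_{v\in V}\e{C}^v_{\gamma(v^{(1)})\dots\gamma(v^{(\sharp v)})}\ne 0$ forces, for each vertex $v$, all the values $\gamma(v^{(1)}),\dots,\gamma(v^{(\sharp v)})$ onto the same side. Likewise, since $\v{e}_i\odot\v{e}_j=0$ whenever $i,j$ lie on different sides, the factor $\vs{\tau}^\top\bigodot_{i\in\gamma(h)}\v{e}_i$ vanishes unless all ports of the hyperedge $h$ receive values on the same side.

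The crux is to combine these with connectedness. Fix a $\gamma$ contributing a nonzero term. By the first constraint each vertex is unambiguously assigned a side, so set $V_1=\{v:\gamma(v^{(1)})\in[m]\}$ and $V_2=V\setminus V_1$. If both were nonempty, connectedness would furnish a hyperedge $h$ with ports $v_1^{(i)},v_2^{(j)}\in h$ and $v_1\in V_1$, $v_2\in V_2$, i.e.\ a hyperedge straddling the two sides; the second constraint then kills the term, a contradiction. Hence every surviving $\gamma$ maps all of $P_G$ into $[m]$ or all of $P_G$ into $\{m+1,\dots,m+n\}$, and $r_C(G)$ splits as a sum over these two disjoint families.

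It then remains to identify each piece. On the family $\gamma\in[m]^{P_G}$ we have $\e{C}^v=\e{A}^v$, $\vs{\tau}_i=\vs{\alpha}_i$, and $\v{e}_i\odot\v{e}_j=\v{e}_i\odot_A\v{e}_j$ stays supported on $[m]$, so this part of the sum is literally the expansion of $r_A(G)$. On the family $\gamma\in\{m+1,\dots,m+n\}^{P_G}$, write $\gamma'=\gamma-m$; a short induction on $|h|$ using bilinearity and the rule $\v{e}_i\odot\v{e}_j=t_m(\v{e}_{i-m}\odot_B\v{e}_{j-m})$ gives $\bigodot_{i\in\gamma(h)}\v{e}_i=t_m\big(\bigodot_{i\in\gamma'(h)}\v{e}_i\big)$, and combined with $\vs{\tau}^\top t_m=\vs{\beta}^\top$ (immediate from $\vs{\tau}_{k+m}=\vs{\beta}_k$) and $\e{C}^v_{i_1\dots i_{\sharp v}}=\e{B}^v_{(i_1-m)\dots(i_{\sharp v}-m)}$, this part reproduces $r_B(G)$. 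Adding the two yields $r_C(G)=r_A(G)+r_B(G)$. The main obstacle is precisely the connectedness step: for a disconnected $G$ one could place one component entirely on the $A$-side and another on the $B$-side without violating either structural constraint, producing spurious cross terms $r_A(G_1)r_B(G_2)$ (cf.\ Remark~\ref{prop_connected_components}), which is exactly why the statement is restricted to connected hypergraphs. The verification that $\odot$ is genuinely symmetric and associative, needed for $\bigodot_{i\in\gamma(h)}$ to be well defined, is routine and follows from the corresponding properties of $\odot_A,\odot_B$ together with the vanishing of mixed products.
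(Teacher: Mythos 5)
Your proof is correct and takes essentially the same route as the paper's: both split $\Gamma=[m+n]^{P_G}$ into the all-$[m]$ part, the all-$\{m+1,\dots,m+n\}$ part, and mixed assignments, kill the mixed terms by combining the block structure of the tensors $\e{C}^x$ (no vertex straddles the two sides) with the vanishing of mixed $\odot$-products on a straddling hyperedge supplied by connectedness, and then identify the two surviving sums with $r_A(G)$ and $r_B(G)$. Your write-up is if anything slightly more careful than the paper's on the $B$-side identification (the explicit use of $t_m$ and $\vs{\tau}^\top t_m=\vs{\beta}^\top$) and on well-definedness of $\bigodot$, but the argument is the same.
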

\begin{proof}
Let $P_G$ be the set of ports of $G$, let $\Gamma=[m+n]^{P_G}$, $\Gamma_1=\{\gamma\in \Gamma: \gamma(P_G)\subseteq [m]\}$ and  $\Gamma_2=\{\gamma\in \Gamma: \gamma(P_G)\subseteq \{m+1, \dots, m+n\}\}$. 

If $\gamma\not \in \Gamma_1\cup \Gamma_2$, then $\e{C}_{\gamma}\prod_{h\in P_G}\tau^{\top}\bigodot_{i\in\gamma(h)}\v{e}_i=0$. Indeed, let $V_1=\{v\in V:\exists v^{(i)}\in P_G\textrm{ s.t. } \gamma(v^{(i)})\leq m\}$ and $V_2=\{v\in V:\exists v^{(i)}\in P_G\textrm{ s.t. } \gamma(v^{(i)})> m\}$. Note that $V_1$ and $V_2$ are not empty.
\begin{itemize}
\item If there exists $v\in V_1\cap V_2$, then $\e{C}^v_{\gamma(v,1)\dots \gamma(v,\sharp v)}=0$ and therefore $\e{C}_{\gamma}=0$
\item If $V_1 \cap V_2=\emptyset$, there exists a hyperedge $h$ and ports $v_1^{(i)}, v_2^{(j)}\in h$ such that $v_1\in V_1$ and $v_2\in V_2$, since $G$ is connected. Then, $\bigodot_{i\in\gamma(h)}\v{e}_i=0$. 
\end{itemize}
Now, 
\begin{align*}
r_C(G)&=\sum_{\gamma\in \Gamma}\e{C}_{\gamma}\prod_{h\in E}\vs{\tau}^{\top}\bigodot_{i\in\gamma(h)}\v{e}_i\\
&=\sum_{\gamma\in \Gamma_1}\e{C}_{\gamma}\prod_{h\in E}\vs{\tau}^{\top}\bigodot_{i\in\gamma(h)}\v{e}_i+\sum_{\gamma\in \Gamma_2}\e{C}_{\gamma}\prod_{h\in E}\vs{\tau}^{\top}\bigodot_{i\in\gamma(h)}\v{e}_i\\
&=\sum_{\gamma\in \Gamma_A}\e{A}_{\gamma}\prod_{h\in
  E}\vs{\alpha}^{\top}\bigodot_{i\in\gamma(h)}\v{e}_i+\sum_{\gamma\in
  \Gamma_B}\e{B}_{\gamma}\prod_{h\in
  E}\vs{\beta}^{\top}\bigodot_{i\in\gamma(h)}\v{e}_i \\
\end{align*}
where $\Gamma_A=[m]^{P_G}$ and $\Gamma_B=[n]^{P_G}$.
Eventually, $r_C(G)=r_A(G)+r_B(G).$\end{proof}

\comment{
In this section, we show that the set of HWMs is closed under addition and Hadamard product.
\begin{proposition}
\label{stability_addition}
Let $A=\langle \K^m, \{\e{A}^x\}_{x\in \Sigma} , \odot_A, \alpha \rangle$, and $B=\langle \K^n, \{\e{B}^x\}_{x\in \Sigma} , \odot_B, \beta \rangle$ be two HWMs. Let $(\v{e}_1, \cdots, \v{e}_m, \v{f}_1, \cdots , \v{f}_n)$ be the canonical basis of $\K^{m+n}$, abusing the notation we consider  $(\v{e}_1, \cdots, \v{e}_m)$ as the basis of $\K^m$ and $(\v{f}_1, \cdots , \v{f}_n)$ as the basis of $\K^n$.

Define the HWM $C = \langle \K^{m+n}, \{\e{C}^x\}_{x\in\Sigma}, \odot, \tau \rangle$ by
\begin{itemize}
\item $ \e{C}^x = \sum_{i_1, \cdots, i_{\sharp x} \in [m]} \e{A}_{i_1,\cdots, i_{\sharp x}} \v{e}_{i_1} \otimes \cdots \otimes \v{e}_{i_{\sharp x}} + \sum_{i_1, \cdots, i_{\sharp x} \in [n]} \e{B}_{i_1,\cdots, i_{\sharp x}} \v{f}_{i_1} \otimes \cdots \otimes \v{f}_{i_{\sharp x}}$
\item $\vs{\tau} = \sum_{i\in [m]} \alpha_i \v{e}_i + \sum_{i\in [n]} \beta_i \v{f}_i$
\item $\v{e}_i \odot \v{e}_j = \v{e}_i \odot_A \v{e}_j$ for all $i,j \in [m]$, $\v{f}_i \odot \v{f}_j = \v{f}_i \odot_B \v{f}_j$ for all $i,j \in [n]$, and $\v{e}_i \odot \v{f}_j = \v{f}_j  \odot \v{e}_i = \mathbf{0}$ for all $i\in [m], j\in [n]$.
\end{itemize}
} 

			\begin{proposition}
Let $A=\langle \K^m,\{\e{A}^x\}_{x\in \Sigma} , \odot^A, \vs{\vs{\alpha}} \rangle$ and $B=\langle \K^n,\{\e{B}^x\}_{x\in \Sigma} , \odot^B, \vs{\beta} \rangle$ be two HWMs. 

Identifying $\K^m \otimes \K^n$ with $\K^{mn}$ via  the mapping $\v{e}_i  \otimes \v{e}_j \mapsto \v{e}_{n(i-1) + j}$, we define                                                                                                                                                                                                                                                                                                                                                                                                                         the HWM $D = \langle \K^m \otimes \K^n, \{\e{D}^x\}_{x\in\Sigma}, \odot, \vs{\delta} \rangle$  by 
\begin{itemize}
\item $\e{D}^x= \e{A}^x \otimes \e{B}^x$ for all $x\in\Sigma$
\item $(\v{a}_1\otimes\v{b}_1)\odot (\v{a}_2\otimes\v{b}_2) = (\v{a}_1\odot^A\v{a}_2)\otimes  (\v{b}_1\odot^B\v{b}_2)$ for all $\v{a}_1,\v{a}_2\in \K^m$  and $\v{b}_1,\v{b}_2 \in \K^n$
\item $\vs{\delta} = \vs{\alpha} \otimes \vs{\beta}$ (i.e. $\vs{\delta}^\top (\v{a}\otimes \v{b}) = (\vs{\alpha}^\top\v{a})(\vs{\beta}^\top\v{b})$  for any $\v{a}\in \K^m$ and $\v{b}\in \K^m$)
\end{itemize}

Let $r_A$ (resp. $r_B$) be the series computed by $A$ (resp. by $B$). Then the HWM $C$ computes the series $r_{C}(G)=r_{A}(G)r_{B}(G)$, for any hypergraph $G$.
\end{proposition}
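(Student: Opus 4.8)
The plan is to exploit the product structure of $D$ so as to factor the sum defining $r_D(G)$ into a product of the sums defining $r_A(G)$ and $r_B(G)$. The backbone of the argument is the identification $\v{e}_i \otimes \v{e}_j \mapsto \v{e}_{n(i-1)+j}$: it induces a bijection $[mn]\cong [m]\times[n]$ and hence a bijection between the index set $\Gamma_D = [mn]^{P_G}$ of $D$ and the cartesian product $[m]^{P_G}\times[n]^{P_G}$. Concretely, every $\gamma\in\Gamma_D$ corresponds to a unique pair $(\gamma_A,\gamma_B)$ with $\gamma(p)\leftrightarrow(\gamma_A(p),\gamma_B(p))$ for each port $p\in P_G$, and I will push the whole computation through this correspondence. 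As a preliminary I would note that $D$ is a genuine HWM, i.e.\ that $\odot$, defined on pure tensors and extended bilinearly, is associative and symmetric; both properties are inherited verbatim from $\odot^A$ and $\odot^B$ and require no real work.

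Next I would evaluate the weight $\e{D}_\gamma$ for $\gamma\leftrightarrow(\gamma_A,\gamma_B)$. Since $\e{D}^v=\e{A}^v\otimes\e{B}^v$ is the mode-wise product induced by the identification, its $(\ell_1,\dots,\ell_{\sharp v})$-entry with $\ell_p\leftrightarrow(\gamma_A(v^{(p)}),\gamma_B(v^{(p)}))$ equals $\e{A}^v_{\gamma_A(v^{(1)})\dots\gamma_A(v^{(\sharp v)})}\,\e{B}^v_{\gamma_B(v^{(1)})\dots\gamma_B(v^{(\sharp v)})}$. Taking the product over $v\in V$ immediately gives $\e{D}_\gamma=\e{A}_{\gamma_A}\,\e{B}_{\gamma_B}$.

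The heart of the proof is the factorization of each hyperedge contribution. Fixing a hyperedge $h$, I claim that
$$\vs{\delta}^\top \bigodot_{p\in h} \v{e}_{\gamma(p)} = \left(\vs{\alpha}^\top \bigodot^{A}_{p\in h} \v{e}_{\gamma_A(p)}\right)\left(\vs{\beta}^\top \bigodot^{B}_{p\in h} \v{e}_{\gamma_B(p)}\right).$$
To obtain this I would first show, by induction on $|h|$, that $\bigodot_{p\in h}\v{e}_{\gamma(p)} = \bigl(\bigodot^{A}_{p\in h}\v{e}_{\gamma_A(p)}\bigr)\otimes\bigl(\bigodot^{B}_{p\in h}\v{e}_{\gamma_B(p)}\bigr)$, the base case being the identification $\v{e}_{\gamma(p)}=\v{e}_{\gamma_A(p)}\otimes\v{e}_{\gamma_B(p)}$ and the inductive step being a single application of the defining rule $(\v{a}_1\otimes\v{b}_1)\odot(\v{a}_2\otimes\v{b}_2)=(\v{a}_1\odot^A\v{a}_2)\otimes(\v{b}_1\odot^B\v{b}_2)$. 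Applying the multiplicative property $\vs{\delta}^\top(\v{a}\otimes\v{b})=(\vs{\alpha}^\top\v{a})(\vs{\beta}^\top\v{b})$ then yields the claim.

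Combining the two factorizations, both the weight and the product over hyperedges in $r_D(G)$ split cleanly into an $A$-part depending only on $\gamma_A$ and a $B$-part depending only on $\gamma_B$. The sum over $\Gamma_D$ thus becomes a double sum over $[m]^{P_G}\times[n]^{P_G}$ whose summand is a product of a function of $\gamma_A$ alone and a function of $\gamma_B$ alone; such a double sum factors as a product of two single sums, which are precisely $r_A(G)$ and $r_B(G)$, giving $r_D(G)=r_A(G)\,r_B(G)$. I expect the only delicate point to be the hyperedge factorization: because a hyperedge is an \emph{unordered} set over which $\bigodot$ is iterated, the induction must be carried out so as to be independent of the order in which the ports of $h$ are combined, which is exactly what the associativity and symmetry of $\odot$, $\odot^A$ and $\odot^B$ guarantee.
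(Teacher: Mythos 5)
Your proposal is correct and follows essentially the same route as the paper's proof: the identification $[mn]\cong[m]\times[n]$ splitting each $\gamma$ into $(\gamma_A,\gamma_B)$, the entrywise factorization $\e{D}_\gamma=\e{A}_{\gamma_A}\e{B}_{\gamma_B}$, the per-hyperedge identity $\vs{\delta}^\top\bigodot_{p\in h}\v{e}_{\gamma(p)}=\bigl(\vs{\alpha}^\top\bigodot^{A}_{p\in h}\v{e}_{\gamma_A(p)}\bigr)\bigl(\vs{\beta}^\top\bigodot^{B}_{p\in h}\v{e}_{\gamma_B(p)}\bigr)$, and the resulting factorization of the double sum. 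The only difference is that you spell out details the paper leaves implicit (the induction on $|h|$ and the check that $\odot$ is a genuine product), which is fine.
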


\begin{proof}
Let $G=(V,E,l)$ be a hypergraph, and let $\Gamma_k = [k]^{P_G}$ for any integer $k$. We will identify $[m]\times [n]$ with $[mn]$ via the mapping $(i,j) \mapsto n(i-1) + j$, and by extension $\Gamma_{mn}$ with $ \Gamma_{m\times n} = ([m]\times [n])^{P_G}$. For any $\gamma \in \Gamma_{m\times n}$ we will note $(\gamma_1, \gamma_2)$ the only element of $\Gamma_m \times \Gamma_n$ satisfying $\gamma(\cdot) = (\gamma_1(\cdot), \gamma_2(\cdot))$.

First note that for any hyperedge $h\in E$ and $\gamma \in \Gamma_{m\times n}$, we have 
\begin{align*}
\vs{\delta}^\top \bigodot_{(i,j) \in \gamma(h)} (\v{e}_i \otimes \v{e}_j) &= \vs{\delta}^\top \left[ \left( \bigodot^{\quad {}_A}_{i \in \gamma_1(h)} \v{e}_i\right) \otimes \left(\bigodot^{\quad {}_B}_{j \in \gamma_2(h)} \v{e}_j\right)  \right] \\
&= \left( \vs{\alpha}^\top \bigodot^{\quad {}_A}_{i \in \gamma_1(h)} \v{e}_i\right) \left( \vs{\beta}^\top \bigodot^{\quad {}_B}_{j \in \gamma_2(h)} \v{e}_j\right)
\end{align*} 
Then, note that for any $\gamma \in \Gamma_{m \times n}$, we have 
$$\e{D}_\gamma = \prod_{v\in V} \e{D}^{l(v)}_{\gamma(v,1) \cdots \gamma(v,\sharp v)} = \prod_{v\in V} \e{A}^{l(v)}_{\gamma_1(v,1), \cdots, \gamma_1(v,\sharp v)} \e{B}^{l(v)}_{\gamma_2(v,1), \cdots, \gamma_2(v,\sharp v)} = \e{A}_{\gamma_1} \e{B}_{\gamma_2}$$

Finally, we have
\begin{align*}
r_D(G) &= \sum_{\gamma \in \Gamma} \e{D}_\gamma \prod_{h\in E} \vs{\delta}^\top \bigodot_{(i,j) \in \gamma(h)} (\v{e}_i \otimes \v{e}_j) \\
&= \sum_{\gamma_1 \in \Gamma_m} \sum_{\gamma_2 \in \Gamma_n} \e{A}_{\gamma_1} \e{B}_{\gamma_2} \prod_{h\in E}  \left( \vs{\alpha}^\top \bigodot^{\quad {}_A}_{i \in \gamma_1(h)} \v{e}_i\right) \left( \vs{\beta}^\top \bigodot^{\quad {}_B}_{j \in \gamma_2(h)} \v{e}_j\right)
= r_A(G) r_B(G)
\end{align*}
\end{proof}

Finally, the next proposition shows that any recognizable real valued series on closed graphs can be computed by a HWM with coefficients in $\C$ using the identity product $\odot_{id}$ and the vector $\mathbf{1}$.

\begin{proposition}
Let $A = \langle \R^d,\{\e{A}^x\}_{x\in \Sigma} , \odot_A, \vs{\vs{\alpha}} \rangle$ be a HWM. There exists a HWM $B = \langle \C^d,\{\e{B}^x\}_{x\in \Sigma} , \odot_{id}, \mathbf{1} \rangle$ such that $r_B(G) = r_A(G)$ for any closed graph $G$.
\end{proposition}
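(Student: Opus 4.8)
The plan is to observe that on a \emph{closed graph} the product $\odot_A$ and the vector $\vs{\alpha}$ enter the computation only through a single symmetric matrix, and then to absorb that matrix into the tensors by a symmetric factorization that forces complex coefficients. Since $G$ is both closed and a graph, every hyperedge $h\in E$ has exactly two ports, and since $E$ partitions $P_G$, every port lies in exactly one edge. Define $\v{M}\in \R^{d\times d}$ by $\v{M}_{ij} = \vs{\alpha}^\top(\v{e}_i \odot_A \v{e}_j)$; because $\odot_A$ is symmetric, $\v{M}$ is symmetric (associativity plays no role here, as each edge pairs only two vectors). With this notation,
$$r_A(G) = \sum_{\gamma \in \Gamma}\e{A}_\gamma \prod_{h = \{p,q\}\in E} \v{M}_{\gamma(p),\gamma(q)}, \qquad \Gamma = [d]^{P_G}.$$

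The key step is to factor $\v{M}$. Over $\C$ every symmetric matrix admits a factorization $\v{M} = \v{S}^\top \v{S}$ with $\v{S}\in \C^{d\times d}$: writing $\v{M}$ as congruent to $\mathrm{diag}(\v{I}_r,0)$ and absorbing complex square roots of the diagonal entries yields such an $\v{S}$. This is exactly where complex coefficients are needed, since over $\R$ a factorization $\v{S}^\top\v{S}$ exists only for positive semidefinite $\v{M}$. Substituting $\v{M}_{\gamma(p),\gamma(q)} = \sum_{k} \v{S}_{k,\gamma(p)}\v{S}_{k,\gamma(q)}$ and introducing one summation index $k_h$ per edge, the two factors $\v{S}_{k_h,\gamma(p)}$ and $\v{S}_{k_h,\gamma(q)}$ attached to an edge $h=\{p,q\}$ can be assigned to the two vertices owning $p$ and $q$ respectively. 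Note that $\v{S}$ is $d\times d$, so the dimension $d$ is preserved as required by the statement.

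I would then define $B = \langle \C^d, \{\e{B}^x\}_{x\in\Sigma}, \odot_{id}, \mathbf{1}\rangle$, where $\e{B}^x$ is obtained from $\e{A}^x$ by contracting each mode with $\v{S}$, namely $\e{B}^x_{j_1\cdots j_{\sharp x}} = \sum_{i_1,\dots,i_{\sharp x}} \v{S}_{j_1 i_1}\cdots \v{S}_{j_{\sharp x} i_{\sharp x}}\e{A}^x_{i_1\cdots i_{\sharp x}}$. Because each port lies in exactly one edge, once the edge-variables $(k_h)_{h\in E}$ are fixed the sum over $\gamma$ in $r_A(G)$ factorizes over vertices, and each per-vertex sum collapses to one entry of $\e{B}^v$, giving $r_A(G) = \sum_{(k_h)_{h\in E}} \prod_{v\in V} \e{B}^v_{k_{h(v,1)}\cdots k_{h(v,\sharp v)}}$, where $h(v,i)$ is the edge containing port $(v,i)$. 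Finally, I would invoke the remark that for $\odot_{id}$ and $\mathbf{1}$ one has $r_B(G) = \sum_{\gamma\in\Gamma_{Id}}\e{B}_\gamma$, and observe that an element of $\Gamma_{Id}$ is precisely a choice of one value $k_h$ per edge; matching the two expressions yields $r_B(G) = r_A(G)$.

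The main obstacle is the factorization $\v{M} = \v{S}^\top \v{S}$ over $\C$, together with the verification that splitting each edge's matrix symmetrically between its two endpoints is legitimate. This split is exactly what forces $G$ to be closed (a size-one edge would produce an unpaired linear form $\vs{\alpha}^\top \v{e}_{\gamma(p)}$ with no second mode to absorb $\v{S}$) and a genuine graph (a size-three-or-larger edge would replace the two-index object $\v{M}$ by a higher-order symmetric tensor admitting no analogous factorization into mode-wise linear maps). Everything else is routine index bookkeeping.
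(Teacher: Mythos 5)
Your proof is correct and takes essentially the same route as the paper: the paper defines the identical matrix $\v{M}_{ij} = \vs{\alpha}^\top(\v{e}_i \odot_A \v{e}_j)$, factors it as $\v{M} = \v{Q}^\top \v{Q}$ over $\C$ using symmetry, and applies $\v{Q}$ to every mode of each $\e{A}^x$ --- which, written there via the rank-one decomposition of Remark~\ref{rmk_pure_tensors}, is exactly your mode-wise contraction with $\v{S}$. The only difference is presentational: the paper verifies $r_B(G)=r_A(G)$ edge by edge through the bilinear forms $\mathbf{1}^\top(\v{u}\odot_{id}\v{v}) = \v{u}^\top\v{v}$ and $\v{u}^\top\v{M}\v{v} = \vs{\alpha}^\top(\v{u}\odot_A\v{v})$, whereas you carry out the equivalent index bookkeeping with one summation variable per edge and the $\Gamma_{Id}$ characterization.
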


\begin{proof}
We consider the decomposition $\e{A}^x = \sum_{r=1}^R \v{a}^{(x,1)}_r \otimes \cdots \otimes \v{a}^{(x,\sharp x)}_r$ for each $x\in \Sigma$ (see Remark~\ref{rmk_pure_tensors}). Let $\v{M}\in \R^{d\times d}$ be the matrix defined by $\v{M}_{ij} = \vs{\alpha}^\top (\v{e}_i \odot_A \v{e}_j)$ and check that $\v{u}^\top \v{M} \v{v} = \vs{\alpha}^\top (\v{u} \odot_A \v{v})$ and $\mathbf{1}^\top (\v{u} \odot_{id} \v{v}) = \v{u}^\top \v{v}$ for any $\v{u},\v{v} \in \R^d$. Let $\v{Q} \in \C^{d\times d}$ be such that $\v{M} = \v{Q}^\top \v{Q}$ (such a decomposition exists since $\v{M}$ is symmetric) and let $\e{B}^x = \sum_{r=1}^R (\v{Q}\v{a}^{(x,1)}_r) \otimes \cdots \otimes (\v{Q}\v{a}^{(x,\sharp x)}_r)$. For any closed graph $G = (V,E,\ell)$, it follows from Remark~\ref{rmk_pure_tensors} that

\begin{align*}
r_B(G) &= \prod_{h\in E,\atop h=\{(v,i),(w,j)\}} \mathbf{1}^\top \left[ \left(\sum_{r=1}^R \v{Q}\v{a}^{(\ell(v),i)}_r\right) \odot_{id}  \left(\sum_{r=1}^R \v{Q}\v{a}^{(\ell(w),j)}_r\right) \right] \\
&= \prod_{h\in E,\atop h=\{(v,i),(w,j)\}} \left( \sum_{r=1}^R \v{a}^{(\ell(v),i)}_r \right)^\top \v{Q}^\top \v{Q} \left(\sum_{r=1}^R \v{a}^{(\ell(w),j)}_r\right) \\
&= \prod_{h\in E,\atop h=\{(v,i),(w,j)\}} \vs{\alpha}^\top \left[ \left(\sum_{r=1}^R \v{a}^{(\ell(v),i)}_r\right) \odot_{A}  \left(\sum_{r=1}^R \v{a}^{(\ell(w),j)}_r\right) \right] = r_A(G).
\end{align*}
\end{proof}


\section{Recognizability of Finite Support Series}\label{sec:tilings}

In this section, we show that finite support series (or \emph{polynomials}: series for which the set of hypergraphs with non-zero value is finite) are not recognizable in general, but we exhibit a wide class of families of hypergraphs for which they are.

First, we show on a simple example why polynomials are not recognizable for all families of hypergraphs. Consider the family of circular strings over a one letter alphabet $\Sigma = \{a\}$ introduced in Example~\ref{ex_circular_strings} and Remark~\ref{rmk_circular_strings}. The following lemma implies that the series $r$, defined by $r(G_a) = 1$ and $r(G_{a^k})=0$ for all integer $k > 1$, is not recognizable. 
Indeed, $r$ would be such that $r(G_{a^k}) = Tr(\v{M}_a^{k}) = 0$ for all $k \geq 2$, but it then follows from Lemma~\ref{lemma_circular_strings} that $r(G_{a}) = Tr(\v{M}_a)=0$. 
\begin{lemma}
\label{lemma_circular_strings}
Let $\v{M}\in \R^{n\times n}$. If $Tr(\v{M}^k)=0$ for all $k \geq 2$, then $Tr(\v{M})=0$.
\end{lemma}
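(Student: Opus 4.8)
The plan is to diagonalize the problem spectrally. Over $\C$, let $\lambda_1, \dots, \lambda_n$ be the eigenvalues of $\mathbf{M}$ counted with algebraic multiplicity, so that $Tr(\mathbf{M}^k) = \sum_{i=1}^n \lambda_i^k =: p_k$ is the $k$-th power sum of the spectrum. The hypothesis then reads $p_k = 0$ for all $k \geq 2$, and the conclusion $Tr(\mathbf{M}) = 0$ is exactly $p_1 = 0$. I would in fact prove the stronger statement that the spectrum consists only of zeros, which trivially yields $p_1 = 0$.

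Suppose, for contradiction, that $\mathbf{M}$ has at least one nonzero eigenvalue. Collect the distinct nonzero eigenvalues as $\mu_1, \dots, \mu_m$ with $m \geq 1$, and let $a_j \geq 1$ denote the multiplicity of $\mu_j$. Since zero eigenvalues contribute nothing to $p_k$ for $k \geq 1$, the hypothesis becomes $\sum_{j=1}^m a_j \mu_j^k = 0$ for every $k \geq 2$. I would now keep only the $m$ equations $k = 2, 3, \dots, m+1$ and set $b_j := a_j \mu_j^2$; these equations rewrite as $\sum_{j=1}^m b_j \mu_j^l = 0$ for $l = 0, 1, \dots, m-1$. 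This is a homogeneous linear system whose coefficient matrix $(\mu_j^l)_{0 \leq l \leq m-1,\, 1 \leq j \leq m}$ is a Vandermonde matrix in the pairwise distinct nodes $\mu_1, \dots, \mu_m$, hence invertible. Therefore $b_j = 0$ for all $j$, and since $\mu_j \neq 0$ this forces $a_j = 0$, contradicting $a_j \geq 1$.

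The contradiction shows that $\mathbf{M}$ has no nonzero eigenvalue, so $\lambda_1 = \dots = \lambda_n = 0$ and $Tr(\mathbf{M}) = p_1 = 0$, as desired. The only genuine content lies in the middle step: a priori nothing prevents the nonzero eigenvalues from cancelling in $p_k$ for all $k \geq 2$ while leaving $p_1 \neq 0$, and the crux of the proof is to exclude this. The Vandermonde invertibility, equivalently the linear independence of the geometric sequences $(\mu_j^l)_l$ attached to distinct $\mu_j$, is precisely what forbids such cancellation. An equivalent route would be to note that the generating function $\sum_{k \geq 1} p_k t^k = \sum_i \lambda_i t / (1 - \lambda_i t)$ would have to reduce to the polynomial $p_1 t$, which is impossible unless every $\lambda_i$ vanishes, since each distinct nonzero $\mu_j$ contributes a genuine simple pole at $t = 1/\mu_j$. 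Finally, I would remark that passing to $\C$ is harmless: $\mathbf{M}$ is real but its complex eigenvalues simply occur in conjugate pairs, and the argument never uses reality of the individual $\lambda_i$.
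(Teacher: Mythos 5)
Your proof is correct and follows essentially the same route as the paper: both reduce $Tr(\v{M}^k)=\sum_j a_j\mu_j^k$ to the distinct nonzero eigenvalues, use the equations for $k=2,\dots,m+1$, and invoke the invertibility of the resulting (scaled) Vandermonde matrix --- your substitution $b_j=a_j\mu_j^2$ is just an explicit factoring of the determinant $\prod_i\lambda_i^2\prod_{i<j}(\lambda_j-\lambda_i)$ that the paper computes directly. No gap; the generating-function remark is a nice alternative but plays no role in the argument.
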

\begin{proof}

Let $\lambda_1, \dots, \lambda_p\in \C$ be the distinct non zero
eigenvalues of $\v{M}$, with multiplicities $n_1, \dots, n_p$. The matrix
$\v{N}=(\lambda_i^j)_{i\in[p],2\leq j\leq p+1}$ is full rank: its determinant
is equal to $ \prod_i\lambda_i^2\prod_{i<j}(\lambda_j-\lambda_i)$. We have $( n_1, \dots, n_p)\v{N}=\left( Tr(\v{M}^2),\dots,
Tr(\v{M}^{p+1})\right)=(0, \cdots, 0)$. Therefore, the unique eigenvalue of $\v{M}$ is equal to 0 and
hence, $Tr(\v{M})=0$.
\end{proof}

This example illustrates the fact that the computation of a HWM on a hypergraph $G$ is done independently on each hyperedge of $G$. This implies that if two hypergraphs are not distinguishable by just looking at the ports involved in their hyperedges, the computations of a HWM on these two hypergraphs are strongly dependent. This is clear if we consider a hypergraph $G_1$ made of two copies of a hypergraph $G_2$ (i.e. $G_1$ has two connected components, which are both isomorphic to $G_2$): we have $r(G_1) = r(G_2)^2$ for any HWM $r$ (see Remark~\ref{prop_connected_components}). 

The following section formally introduces the notion of \emph{tiling} of a hypergraph $G$ and show how this relation between hypergraphs relates to the question of the recognizability of polynomials.


	\subsection{Tilings}
			A \emph{tiling} of a hypergraph $\widehat{G}$ is a hypergraph $G$, built on the same alphabet and made of copies of $\widehat{G}$. More precisely, 

\begin{definition}
Let $\widehat{G}=(\widehat{V},\widehat{E},\widehat{l})$ be a hypergraph over a ranked alphabet $(\Sigma, \sharp)$. A hypergraph $G=(V,E,l)$ on the same alphabet $(\Sigma, \sharp)$ is a \emph{tiling} of $\widehat{G}$ if and only if there exists a mapping $f:V \to \widehat{V}$ such that
\begin{itemize}
\item[(i)] $l(v) = \widehat{l}(f(v))$ for  any $v\in V$
\item[(ii)] the mapping  $g: P_G \to P_{\widehat{G}}$ defined by $g(v,i) = (f(v),i)$ is such that for all $h\in E$: $g(h) \in \widehat{E}$ and the restriction $g_{|h}$ of $g$ to $h$ is bijective. 
\end{itemize}
\end{definition}

The following proposition shows that for a connected hypergraph, this formal definition of tiling is equivalent to the intuition of a hypergraph made of copies of the original one.

Let $G=(V,E,l)$ be a tiling of the connected hypergraph $\widehat{G}=(\widehat{V},\widehat{E},\widehat{l})$, let $\sim_V$ be the equivalence relation defined on $V$ by $v\sim_V v'$ iff $f(v)=f(v')$, and let $\sim_E$ be the equivalence relation defined on $E$ by $h\sim_E h'$ iff $g(h)=g(h')$ where $f$ and $g$ are the mappings defined above. Clearly, $v\sim_V v'$ entails that $l(v)=l(v')$ and it can easily be shown that  $h\sim_E h'$ iff $\exists v^{(i)}\in h, v'^{(i)}\in h'$ such that $v\sim_Vv'$. We can thus define the quotient hypergraph $\overline{G}=(V/\sim_V, E/\sim_E, l)$. 

\begin{proposition}
If $G=(V,E,l)$ is a tiling of a connected hypergraph $\widehat{G}=(\widehat{V},\widehat{E},\widehat{l})$, then $\overline{G}=(V/\sim_V, E/\sim_E, l)$ is isomorphic to $\widehat{G}$ and moreover, for any $\hat{v}\in \widehat{V}$, the cardinal of $f^{-1}(\{\hat{v}\})$ is a constant. 
\end{proposition}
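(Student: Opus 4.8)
The plan is to establish the two assertions—first that $\overline{G}$ is isomorphic to $\widehat{G}$, and then that the fibers $f^{-1}(\{\hat v\})$ all have the same cardinality—using the connectedness of $\widehat{G}$ together with the bijectivity condition (ii) on each hyperedge. First I would set up the candidate isomorphism. The quotient map $f$ descends to a well-defined map $\overline{f}: V/\!\sim_V \;\to\; \widehat V$ (by construction of $\sim_V$), and similarly $g$ descends to $\overline{g}: E/\!\sim_E \;\to\; \widehat E$. I would check that $\overline{f}$ preserves labels (immediate from condition (i)) and is a bijection on vertices: injectivity is built into the definition of $\sim_V$, and surjectivity is where I expect the connectivity of $\widehat G$ to enter. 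For the hyperedge structure, condition (ii) guarantees $g(h)\in\widehat E$ with $g_{|h}$ bijective, so $\overline{g}$ is well-defined and each equivalence class of edges maps to an actual hyperedge of $\widehat G$ with the ports matching up; the bijectivity of $g_{|h}$ is exactly what forces the quotiented ports to line up with those of $\widehat{G}$, so that $\overline{f},\overline{g}$ together respect the incidence between ports and hyperedges.

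Next I would prove the surjectivity of $\overline{f}$ and, simultaneously, the constancy of the fiber cardinalities. The natural tool is a connectivity/counting argument on $\widehat G$. The key local observation is that for a fixed $\hat v\in\widehat V$ and a fixed port index $i$, the port $(\hat v,i)$ lies in exactly one hyperedge $\hat h\in\widehat E$, and for any vertex $v\in f^{-1}(\{\hat v\})$ the port $(v,i)$ lies in some $h\in E$ with $g(h)=\hat h$; since $g_{|h}$ is a bijection onto $\hat h$, each such $h$ contains exactly one port mapping to each port of $\hat h$. This gives a correspondence, along each hyperedge, between vertices in the fiber over one endpoint and vertices in the fiber over an adjacent endpoint. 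I would make this precise by showing that if $\hat h\in\widehat E$ contains ports $(\hat v,i)$ and $(\hat w,j)$, then the hyperedges of $E$ lying over $\hat h$ set up a bijection between $f^{-1}(\{\hat v\})$ and $f^{-1}(\{\hat w\})$, so that $|f^{-1}(\{\hat v\})| = |f^{-1}(\{\hat w\})|$. Hence the fiber cardinality is constant along any single hyperedge of $\widehat G$.

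To upgrade this local constancy to global constancy, I would invoke connectedness of $\widehat G$: any two vertices $\hat v,\hat w\in\widehat V$ are linked by a chain of hyperedges (this is what connectedness gives, via the partition characterization in the definition of connected hypergraph), and applying the local equality along each hyperedge in the chain yields $|f^{-1}(\{\hat v\})| = |f^{-1}(\{\hat w\})|$. In particular every fiber is nonempty once one of them is (and $V$ is nonempty since $G$ is a hypergraph), which simultaneously delivers the surjectivity of $\overline{f}$ needed above. I would close by noting that with $\overline{f}$ a label-preserving bijection and $\overline{g}$ the induced bijection on hyperedges respecting the port incidences, $(\overline{f},\overline{g})$ is precisely an isomorphism $\overline{G}\cong\widehat G$.

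The main obstacle I anticipate is the bookkeeping in the local fiber-matching step: one must argue that the hyperedges of $G$ over a fixed $\hat h$ genuinely pair off the two fibers bijectively, which requires carefully using that $g_{|h}$ is a \emph{bijection} onto $\hat h$ (not merely that $g(h)\in\widehat E$) to rule out a single $h$ meeting the fiber over $\hat v$ in more than one port, or two distinct $h,h'$ over $\hat h$ sharing a port. Once this pairing is set up cleanly, the connectivity argument and the isomorphism conclusion are routine.
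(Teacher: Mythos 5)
Your proposal is correct and rests on the same local mechanism as the paper's proof: since $E$ partitions $P_G$ and each $g_{|h}$ is a bijection onto $g(h)\in\widehat{E}$, distinct vertices of a fiber $f^{-1}(\{\hat{v}\})$ yield distinct, hence disjoint, hyperedges of $E$ lying over a fixed $\widehat{h}$ containing $(\hat{v},i)$, and these transport the fiber injectively into the fiber over any $\hat{w}$ with $(\hat{w},j)\in\widehat{h}$. The difference is in how this is globalized. The paper argues by extremality: it lets $m$ be the maximal fiber cardinality, splits $\widehat{V}$ into the vertices achieving $m$ and the rest, and applies its partition-based definition of connectedness exactly once to obtain a crossing hyperedge; injecting a size-$m$ fiber into a strictly smaller adjacent fiber gives the contradiction directly, and the isomorphism claim is then treated as an immediate consequence of the surjectivity of $f$. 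You instead upgrade the injection to a genuine bijection between fibers adjacent along a hyperedge and propagate equality of cardinalities along chains of hyperedges. This buys you slightly more (an explicit fiber correspondence, and a spelled-out verification that $\overline{f}$ and $\overline{g}$ preserve labels and port incidences, which the paper leaves implicit), but it costs a small lemma the paper's one-shot argument avoids: chain connectivity is not the paper's definition of connectedness and must be derived from the partition formulation, which you invoke but do not prove (it follows by the standard argument of partitioning $\widehat{V}$ into one chain-component and its complement and extracting a crossing hyperedge). Note also that the well-definedness of your transport map tacitly uses that $\widehat{E}$ is a partition of $P_{\widehat{G}}$, so that a hyperedge of $E$ containing a port over $(\hat{v},i)$ necessarily maps onto the unique $\widehat{h}$ containing $(\hat{v},i)$; you correctly flag this uniqueness at the outset. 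Both routes are sound; the paper's is shorter, yours is more constructive.
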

\begin{proof}
We will prove the last part of the proposition, which entails the surjectivity of $f$. This will be enough since  if $f$ is surjective, then $\overline{G}$ is isomorphic to $\widehat{G}$.

Let $m$ be the maximal cardinality of the sets $f^{-1}(\{\widehat{v}\})$ and suppose that they have different cardinalities. Let $V_1=\{\widehat{v}\in \widehat{V}: Card(f^{-1}(\{\widehat{v}\}))=m\}$ and $V_2=\widehat{V}\setminus V_1$. Since $\widehat{G}$ is connected, there exists a hyperedge $\widehat{h}$ and $\widehat{v}_1^{(i)}, \widehat{v}_2^{(j)}\in\widehat{h}$ such that $\widehat{v}_1\in V_1$ and $\widehat{v}_2\in V_2$. Let $f^{-1}(\{\widehat{v}_1\})=\{v_1, \dots, v_m\}$ and let $h_1, \dots, h_m\in E$ be the hyperedges containing $v_1 ^{(i)}, \dots, v_m ^{(i)}$, respectively. Since each $g_{|h_i}$ is injective and since the vertices $v_1, \dots, v_m$ are distinct, the hyperedges  $h_1, \dots, h_m$ are also distinct and therefore disjoint. Let $w_1^{(j)}=g_{|h_1}^{-1}(\widehat{v}_2 ^{(j)}), \dots, w_m^{(j)} =g_{|h_m}^{-1}(\widehat{v}_2 ^{(j)})$. These ports are distinct and therefore, the vertices $w_1, \dots, w_m$ are also distinct. Since, $f(w_1)=\dots=f(w_m)=\widehat{v}_2$, we obtain a contradiction. 
\end{proof}

\begin{figure}
\begin{center}
\includegraphics[scale=1]{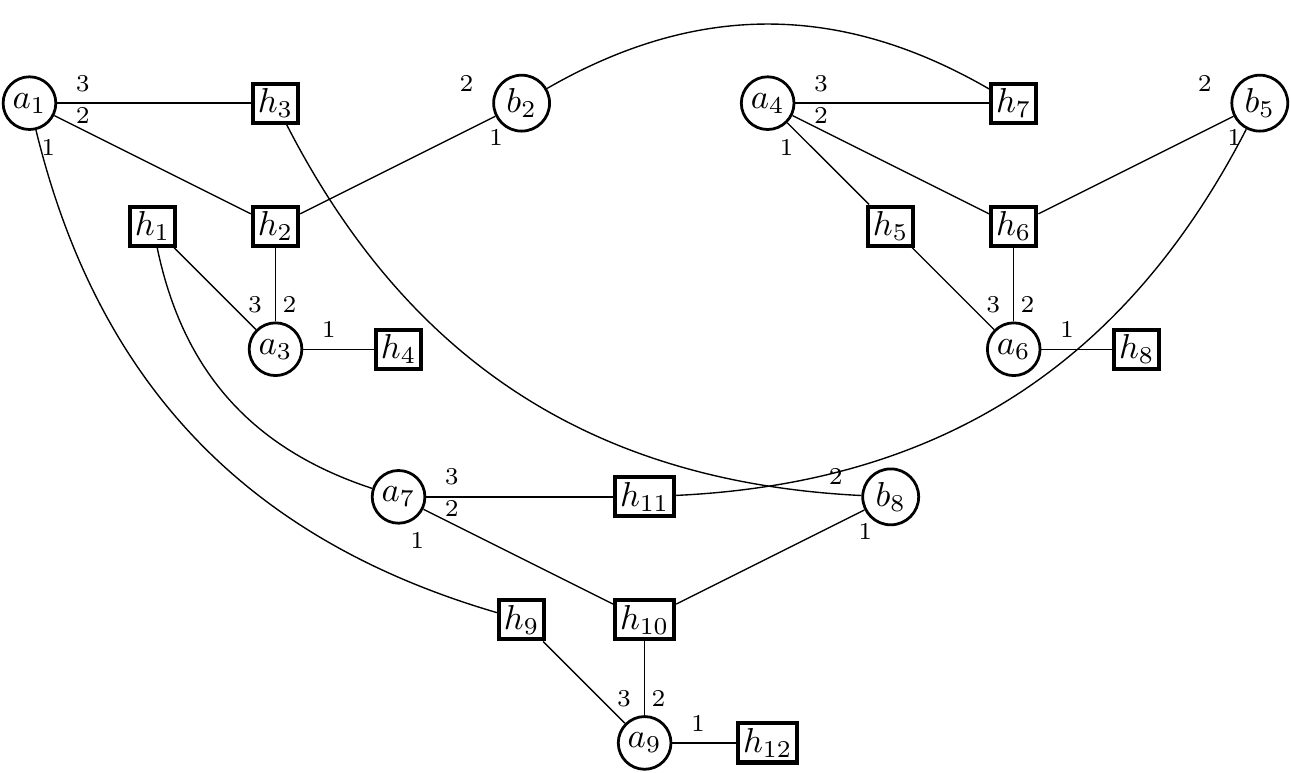}

\caption{A tiling made of three copies of the hypergraph from Example~\ref{ex-1}}
\vspace{-0.25cm}
\end{center}

\end{figure}

\subsection{Finite Support Series and Tilings}
We end this section with the main result of this paper. We show that we can construct a HWM which assigns a nonzero value to a specific hypergraph over some ranked alphabet and all of its tilings, and zero to any other hypergraph on the same alphabet. This result leads to a sufficient condition on families of hypergraphs for the recognizability of finite support series.

\begin{theorem}
Given a hypergraph $\widehat{G} = (\widehat{V}, \widehat{E}, \widehat{l})$ over $(\Sigma,\sharp)$, there exists a recognizable series $r_{\widehat{G}}$ such that $r_{\widehat{G}} (G) \not = 0$ if and only if $G$ is a tiling of $\widehat{G}$.
\end{theorem}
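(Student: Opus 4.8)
The plan is to build a HWM $M = \langle \K^d, \{\e{T}^x\}_{x\in\Sigma}, \odot, \vs{\alpha}\rangle$ whose series $r_M$ counts, for each input hypergraph $G$, the number of maps $f$ witnessing $G$ as a tiling of $\widehat{G}$. Since such a count is a nonnegative integer, it is nonzero exactly when at least one witnessing map exists, i.e. exactly when $G$ is a tiling. The guiding idea is to use the colour set $[d]$ to record, for each port of $G$, the port of $\widehat{G}$ onto which it is sent by the sought tiling map. Concretely I would take $d = 2^{|P_{\widehat{G}}|}$ and index the canonical basis of $\K^d$ by the subsets $S \subseteq P_{\widehat{G}}$, writing $\v{e}_S$ for the corresponding basis vector; the colours that will actually contribute are the singletons $\v{e}_{\{\widehat{p}\}}$ with $\widehat{p}\in P_{\widehat{G}}$, while larger subsets appear only as intermediate values of the product $\odot$.

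The tensors encode condition (i) together with the vertexwise part of (ii). For $x \in \Sigma$ I would set $\e{T}^x_{S_1\cdots S_{\sharp x}} = 1$ if there is a vertex $\widehat{v}\in\widehat{V}$ with $\widehat{l}(\widehat{v}) = x$ and $S_i = \{(\widehat{v},i)\}$ for every $i$, and $\e{T}^x_{S_1\cdots S_{\sharp x}} = 0$ otherwise. Then a coordinate $\e{T}_\gamma$ is nonzero only for those $\gamma$ that assign to each port $(v,i)$ the singleton $\{(f(v),i)\}$ determined by a single map $f\colon V \to \widehat{V}$ with $\widehat{l}(f(v)) = l(v)$; each such label-preserving $f$ yields exactly one such $\gamma$, with $\e{T}_\gamma = 1$. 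In other words, the nonzero terms of the outer sum defining $r_M(G)$ are in bijection with the maps $f$ satisfying condition (i).

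The product $\odot$ and the vector $\vs{\alpha}$ are then tailored so that the per-hyperedge factor $\vs{\alpha}^\top \bigodot_{i\in\gamma(h)}\v{e}_i$ detects condition (ii) for the hyperedge $h$. I would use the \emph{disjoint-union product} defined on basis vectors by $\v{e}_S \odot \v{e}_T = \v{e}_{S\cup T}$ when $S\cap T = \emptyset$ and $\v{e}_S \odot \v{e}_T = \v{0}$ otherwise, extended bilinearly; this is symmetric and associative, and iterating it over a hyperedge gives $\bigodot_{(v,i)\in h}\v{e}_{\{(f(v),i)\}} = \v{e}_{g(h)}$ precisely when the ports $g(v,i)=(f(v),i)$ are pairwise distinct, so $g_{|h}$ is injective, and $\v{0}$ otherwise. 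Taking $\vs{\alpha} = \sum_{\widehat{h}\in\widehat{E}}\v{e}_{\widehat{h}}$, so that $\vs{\alpha}^\top\v{e}_S$ is the indicator of $S\in\widehat{E}$, the factor equals $1$ exactly when $g_{|h}$ is injective and its image $g(h)$ is a hyperedge of $\widehat{G}$; injectivity together with $g(h)=\widehat{h}\in\widehat{E}$ forces $g_{|h}$ to be a bijection of $h$ onto $\widehat{h}$, which is exactly condition (ii).

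Combining these, for each label-preserving $f$ the product over all hyperedges equals $1$ if $f$ satisfies (ii) on every hyperedge and $0$ otherwise, so $r_M(G)$ equals the number of tiling maps $f\colon V\to\widehat{V}$; being a sum of $0$s and $1$s, it is nonzero iff such an $f$ exists, iff $G$ is a tiling of $\widehat{G}$, giving the claim. The step I expect to require the most care is verifying that the single scalar $\vs{\alpha}^\top\bigodot_{i\in\gamma(h)}\v{e}_i$ faithfully captures the full bijectivity in (ii): the disjoint-union product supplies the injectivity of $g_{|h}$ (repeated target ports annihilate), while $\vs{\alpha}$ supplies that the image is an entire hyperedge, and one must check that these two effects jointly force surjectivity onto $\widehat{h}$ and equal cardinalities. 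I would also check that $\odot$ is genuinely associative and symmetric as a bilinear map, which it suffices to verify on basis vectors, so that $M$ is a bona fide HWM and $r_M$ is recognizable.
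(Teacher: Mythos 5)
Your proposal is correct and takes essentially the same route as the paper's own proof: basis vectors indexed by subsets of $P_{\widehat{G}}$, tensors $\e{T}^x = \sum_{\hat{v}\in\widehat{V}(x)} \v{e}_{(\hat{v},1)}\otimes\cdots\otimes\v{e}_{(\hat{v},\sharp\hat{v})}$, the disjoint-union product, and $\vs{\alpha}$ the indicator vector of $\widehat{E}$, with the same reading of the per-hyperedge factor as enforcing injectivity of $g_{|h}$ plus $g(h)\in\widehat{E}$, hence bijectivity. The only (immaterial) differences are that you send degenerate cases to $\v{0}$ where the paper sends them to $\v{e}_\emptyset$ (both for the product when $S\cap T\neq\emptyset$ and for the tensor of a symbol $x$ with $\widehat{V}(x)=\emptyset$, where the paper uses $\v{e}_\emptyset^{\otimes\sharp x}$); since $\vs{\alpha}_\emptyset = 0$, those terms vanish either way.
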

\begin{proof}
Let $P_{\widehat{G}}$ be the set of ports of $\widehat{G}$. For any symbol $x\in \Sigma$, we note $\widehat{V}(x)$ the set of vertices in $\widehat{V}$ labelled by $x$.

Let $\mathcal{S} =  2^{P_{\widehat{G}}}$ bet the set of subsets of $P_{\widehat{G}}$ and let $d = |\mathcal{S}|$. Instead of indexing the canonical basis of $\K^d$ with integers in $[d]$, we will index it with elements of $\mathcal{S}$. For example, for each  port $(\hat{v},i) \in P_{\widehat{G}}$, the singleton $\{(\hat{v},i)\}$ is in $\mathcal{S}$, thus $\v{e}_{\{(\hat{v},i)\}}$ is a basis vector (which we will note $\v{e}_{(\hat{v},i)}$ for convenience).

Define the HWM $M=\langle \K^d,\{\e{T}^x\}_{x\in \Sigma} , \odot, \vs{\alpha} \rangle$ by
\begin{align*}
 \e{T}^x &= {\begin{cases} 
 {\v{e}_\emptyset} ^{\otimes \sharp x} &\mbox{if } \widehat{V}(x) = \emptyset  \\
  \sum_{\hat{v}\in \widehat{V}(x)} \v{e}_{(\hat{v},1)} \otimes \cdots \otimes \v{e}_{(\hat{v},\sharp \hat{v})} &\mbox{otherwise}
\end{cases} }\\
\v{e}_S \odot \v{e}_T &= {\begin{cases}
\v{e}_{S\cup T} &\mbox{if } S \not = \emptyset, T \not = \emptyset \mbox{ and } S \cap T = \emptyset \\
\v{e}_\emptyset &\mbox{otherwise}
\end{cases}} \\
\vs{\alpha}_S &= {\begin{cases}
1 &\mbox{if } S \in \widehat{E} \quad (\mbox{note that } \emptyset \not \in \widehat{E}) \\
0 & \mbox{otherwise}
\end{cases}}
\end{align*}
for any $x\in \Sigma$ and $S,T \in \mathcal{S}$. Let $r$ be the series computed by $M$, we claim that $r$ satisfies the property of the theorem.

For any hypergraph $G = (V,E,l)$ with $V = \{v_1, \cdots, v_N\}$, we have
$$
r(G) = \sum_{\gamma\in\Gamma}\e{T}_{\gamma}\prod_{h\in E}\vs{\alpha}^{\top}\bigodot_{S \in\gamma(h)}e_S
$$
where $\Gamma = \mathcal{S}^{P_G}$ and $\e{T}_{\gamma} = \prod_{i=1}^N \e{T}^{v_i}_{\gamma(v_i, 1), \cdots, \gamma(v_i, \sharp v_i)}$. Let $ \gamma \in \Gamma$. If there exists a port $p \in P_G$ such that $\gamma(p) = \emptyset$, then $\prod_{h\in E}\vs{\alpha}^{\top}\bigodot_{S \in\gamma(h)}e_S = 0$; otherwise, it follows from the definition of the tensors $\e{T}^x$ that  $\e{T}_{\gamma} $ is different from $0$ if and only if for all $v\in V$, there exists $\widehat{v} \in \widehat{V}(l(v))$ s.t. $\gamma(v,i) = (\widehat{v},i)$ for all $i \in [\sharp v]$, hence
$$
r(G) = \sum_{\widehat{v}_1\in \widehat{V}(l(v_1))} \cdots  \sum_{\widehat{v}_N\in \widehat{V}(l(v_N))}   \prod_{i=1}^N \e{T}^{v_i}_{(\widehat{v}_i,1) \cdots (\widehat{v}_i,\sharp \widehat{v}_i))} \prod_{h\in E}\vs{\alpha}^{\top}\bigodot_{(v_j,i_j) \in h}\v{e}_{(\widehat{v}_j, i_j)}
$$

We have $r(G) \not = 0$ if and only if there exist $N$ vertices $\widehat{v}_i \in \widehat{V}(l(v_i))$ for $i \in [N]$ such that (i) $\vs{\alpha}^\top  \bigodot_{(v_j, i_j) \in h} \v{e}_{(\widehat{v}_j, i_j)} \not = 0$ for all $h \in E$. Let $f: V \to \widehat{V}$ and $g:P_G \to P_{\widehat{G}}$ be the mappings defined by $f(v_i)=\widehat{v}_i$ and $g(v_i,j) = (\widehat{v}_i,j)$ for all $i \in [N]$. It follows from the definitions of $\vs{\alpha}$ and $\odot$ that (i) is true if and only if  $g(h) \in \widehat{E}$ for all $h \in E$, and there are no distinct $(v_j,i_j),(v_k,i_k)$ in a hyperedge $h\in E$ such that $(\widehat{v}_j,i_j) = (\widehat{v}_k,i_k)$, i.e. the restriction $g_{|h}$ is injective for any $h \in E$. 
\end{proof}

A family $\mathcal{H}$ of hypergraphs is \emph{tiling-free} if and only if for any $G\in \mathcal{H}$, there are no (non-trivial) tiling of $G$ in $\mathcal{H}$. 

\begin{corollary}
For any tiling-free family of hypergraphs $\mathcal{H}$, finite support series on $\mathcal{H}$ are recognizable.
\end{corollary}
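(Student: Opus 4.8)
The plan is to derive the corollary from the preceding theorem together with the closure of recognizable series under addition (Proposition~\ref{stability_addition}). Let $r:\mathcal{H}\to\K$ be a finite support series and let $\{G_1,\dots,G_k\}\subseteq\mathcal{H}$ be its support, with prescribed values $c_i:=r(G_i)\neq 0$. For each $i$ I would apply the theorem with $\widehat{G}=G_i$ to obtain a recognizable series $r_{G_i}$ with $r_{G_i}(G)\neq 0$ if and only if $G$ is a tiling of $G_i$. The first and most important step is to observe that, once restricted to $\mathcal{H}$, each $r_{G_i}$ is supported on the single hypergraph $G_i$: since $G_i\in\mathcal{H}$ and $\mathcal{H}$ is tiling-free, the only tiling of $G_i$ that lies in $\mathcal{H}$ is the trivial one, namely $G_i$ itself, for any $G\in\mathcal{H}$ that were a non-trivial tiling of $G_i$ would contradict tiling-freeness. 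As $G_i$ is always a tiling of itself (via $f=\mathrm{id}$), we moreover have $a_i:=r_{G_i}(G_i)\neq 0$. Thus the a priori infinite support of $r_{G_i}$ collapses to the point $G_i$ exactly because of the hypothesis.

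Next I would rescale each $r_{G_i}$ so that its value at $G_i$ becomes precisely $c_i$ without enlarging its support on $\mathcal{H}$. For this I replace the vector $\vs{\alpha}$ of the HWM computing $r_{G_i}$ by $\mu_i\vs{\alpha}$: inspecting the defining formula $r_M(G)=\sum_{\gamma}\e{T}_{\gamma}\prod_{h\in E}\vs{\alpha}^{\top}\bigodot_{i\in\gamma(h)}\v{e}_i$ shows that on a hypergraph $G=(V,E,l)$ this multiplies the computed value by $\mu_i^{|E|}$, hence scales the whole series and leaves its support unchanged. Because the ranked alphabet is positive, every vertex carries at least one port, so $G_i$ has at least one hyperedge and $|E|\geq 1$; working over $\C$ I can then choose $\mu_i$ with $\mu_i^{|E|}=c_i/a_i$, producing a recognizable series $s_i$ such that $s_i(G_i)=c_i$ and $s_i(G)=0$ for every other $G\in\mathcal{H}$.

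Finally I would set $r_{\mathcal{H}}=\sum_{i=1}^k s_i$ and conclude that it is recognizable by iterating Proposition~\ref{stability_addition}, which turns two HWMs into a single HWM computing the sum of their series. On $\mathcal{H}$ this series agrees with $r$: at each support point $G_j$ only the term $s_j$ survives and contributes $c_j$, while at every $G\in\mathcal{H}$ outside the support all $s_i$ vanish, so $r_{\mathcal{H}}(G)=r(G)$ throughout $\mathcal{H}$. The step I expect to require the most care is this last summation, since Proposition~\ref{stability_addition} certifies $r_{A+B}=r_A+r_B$ only for \emph{connected} hypergraphs: the argument goes through verbatim when $\mathcal{H}$ consists of connected hypergraphs (which covers the motivating families of strings, trees and rooted hypergraphs), and for families containing disconnected hypergraphs one must additionally invoke the multiplicativity over connected components (Remark~\ref{prop_connected_components}) to neutralize the cross terms introduced by the direct-sum construction underlying the closure under addition.
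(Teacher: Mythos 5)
Your proposal is correct and takes essentially the same route as the paper's own proof: invoke the theorem at each support point $G_i$, use tiling-freeness to collapse the support of $r_{G_i}$ on $\mathcal{H}$ to $\{G_i\}$, rescale by modifying $\vs{\alpha}$ with a root of $c_i/a_i$ (the paper sets $\vs{\alpha}(\v{e}_S)=(y/z)^{1/|\widehat{E}|}$ for $S\in\widehat{E}$, which is the same device), and sum via Proposition~\ref{stability_addition}. Your closing remark about Proposition~\ref{stability_addition} holding only for connected hypergraphs is in fact a point the paper glosses over entirely, so your version is, if anything, slightly more careful.
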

\begin{proof}
Let $r_{\widehat{G}}$ be the series from the previous proof, and let $z = r_{\widehat{G}}(\widehat{G})$. For any scalar $y$, if we change the definition of $\vs{\alpha}$ to $\vs{\alpha}(\v{e}_S) = (y/z)^{1/|\widehat{E}|}$ if $S \in \widehat{E}$ and 0 otherwise, we have  $r_{\widehat{G}}(\widehat{G}) = y$. The corollary then directly follows from the previous theorem and Proposition~\ref{stability_addition}.
\end{proof}

\begin{remark}
A simple example of tiling-free family is the family of rooted hypergraphs: hypergraphs on a ranked alphabet $(\Sigma \cup \{\lambda\}, \sharp)$, where the special \emph{root} symbol $\lambda$ appears exactly once. 
\end{remark}

\comment{
\begin{proposition}\label{prop-tiles}
Let $f: \mathcal{H} \mapsto \mathbb{R}$ be a mapping such that $f(G)\neq 0$ for a $G$. Then there exists an infinite number of tilings $G'$ of $G$ such that $f(G') \neq 0$. In particular, mappings with finite support cannot be modelled by HWMs.
\end{proposition}
\begin{proof}
Let $G$ be a mapping such that $f(G)\neq 0$. Let us build a hypergraph $G_0$ by breaking one hyperedge $h$ of $G$ into two $h_-$ and $h_+$. By summing over all other edges, one finally obtains an order-two tensor $\e{T}^0= \sum t^0_{i^- i^+} e_{i^-} \otimes e_{i^+}$, and its matrix form $\bs{T}_0$. One finally obtains $f(G)$ by the operation $\Sigma [\e{T}^0]_{\{1,2\} }$. 

The mapping $(\bs{e}_i,\bs{e}_j) \mapsto \bs{\beta}^\top (\bs{e}_i \odot \bs{e}_j)$ is a bilinear form: let $\bs{M}$ be the matrix such that $\bs{\beta}^\top (\bs{e}_i \odot \bs{e}_j)=\bs{e}_i^\top \bs{M} \bs{e}_j$. One can check that $\bs{M}$ is symmetric. Now, let $\e{T}^1$ and $\e{T}^2$ be two two-order tensors, and let $\bs{T_1}$ and $\bs{T_2}$ be their matrix forms. We claim that:
\begin{itemize}
\item $\Sigma [\e{T}^1]_{1,2}=Tr(\bs{TM}^\top)$
\item $[\e{T}^1 \e{T}^2]_{2,3}=\bs{T_1M}^\top \bs{T_2}$
\end{itemize}

Indeed, one has that $\Sigma [\e{T}^1]_{1,2}=\sum t^1_{ij} \ee_i^\top \bs{M} \ee_j=\sum t^1_{ij} \ee_i^\top (\sum m_{kj} \ee_k)=\sum t_{ij}m_{jk} \ee_i^\top \ee_k=\sum t_{ij}m_{ij}=Tr(\bs{TM}^\top)$. For the second claim, one can check that the $(i,j')$-th coefficient of both terms is $\sum_{i',j} t^1_{ij} t^2_{i'j'} m_{i'j}$.

Let us now form the set of tilings $G_n$ made of $n$ copies of $G_0$, the $h_+$ edge of the $k$-th copy connected to the $h_-$ edge of the $k+1$-th copy, and the $h_+$ edge of the $n$-th copy connected to the $h_-$ edge of the first copy. From the two previous claims, one can check that $f(G_n)=Tr((\bs{T}_0 \bs{M}^\top)^n)=\lambda_1^n+ \dots + \lambda_k^n$, where the $\lambda_i \in \mathbb{C}$ are the eigenvalues of $\bs{T}_0 \bs{M}^\top$.

Let $\lambda_i=\rho_i exp(2 \bs{i} \pi \theta_i)$. 
Let $\lambda'_i$ be the set of $\lambda_i$ of maximum module such that 
$\exists p,\sum {\lambda'}_i^p =\rho^p (\sum \exp(2 \bs{i} \pi p\theta'_i))\neq 0$. 
Applying the Lemma~\ref{lem-density}, there exists an infinite number of $p_m$ such that $\sum {\lambda'}_i^{p_m} =\rho^{p_m} (\sum \exp(2 \bs{i} \pi p_m\theta'_i))\neq 0$, thus $f(G_{p_m})=Tr((\bs{T}_0 \bs{M}^\top)^{p_m})=\lambda_1^{p_m}+ \dots + \lambda_k^{p_m} \neq 0$.
\end{proof}

\begin{proposition}
Let $G$ be a hypergraph. Then there exists a HWM $A_{G}$ such that $r_{A_{G}}(G')=1$ iff $G'$ is  a tiling of $G$, $r_{A_{G}}(G')=0$ otherwise.
\end{proposition}
\begin{proof}
Let $G=(V= \{ v_i=(x,\{x_i^j\})\} , E=\{ h_k= \{ x_{i}^{j}\}_{(i,j )\in I_k} \})$. For all $h'_k=\{ x_{i}^{j}\}_{(i,j)  \in I'_k \subset I_k }$ let us consider a state $\ee_{\{(i,j) \}_{(i,j) \in I'_k}}$. 

For example, let us consider the graph of example~\ref{ex-1}: the set of edges is $\{ \{ a_1^1,a_3^3\} , \{ a_1^2,b_2^1,a_3^2\}, \{ a_1^3,b_2^2\}, \{ a_3^1\}\}$, thus the set of states is $\{ \ee_{(1,1),(3,3)}, \ee_{(1,1)},\ee_{(3,3)} \}$ for the first edge, $\{ \ee_{(1,2),(2,1),(3,2)}, \ee_{(1,2),(2,1)},\ee_{(1,2),(3,2)},\ee_{(2,1),(3,2)},\ee_{(1,2)},\ee_{(2,1)},\ee_{(3,2)} \}$ for $h_2$, $\{ \ee_{(1,3),(2,2)},\ee_{(1,3)},\ee_{(2,2)}\}$ for $h_3$ and $\{ \ee_{(3,1)} \}$ for $h_4$.

For each symbol $x$, the tensor $\e{T}^x$ is defined as $\e{T}^x=\sum_{v_i=(x,\{x_i^j\})} \ee_{(i,1)} \otimes \dots \otimes \ee_{(i,d_x)}$.

For the previous example, one has $\e{T}^a=\ee_{(1,1)} \otimes \ee_{(1,2)} \otimes \ee_{(1,3)} + \ee_{(3,1)} \otimes \ee_{(3,2)} \otimes \ee_{(3,3)} $, and  $\e{T}^b=\ee_{(2,1)} \otimes \ee_{(2,2)} $.

The bilinear mapping $\odot$ is defined by $\ee_{\{(i,j) \}_{(i,j) \in I'_{k'}}} \odot \ee_{\{(i,j) \}_{(i,j) \in I''_{k''}}}=\ee_{\{(i,j) \}_{(i,j) \in I'_{k'} \cup I''_{k''}}}$ if $I'_{k'} \cap I''_{k''}=\emptyset$ and $I'_{k'} \cup I''_{k''}$ is a subset of an $I_k$. Otherwise, $\ee_{\{(i,j) \}_{(i,j) \in I'_{k'}}} \odot \ee_{\{(i,j) \}_{(i,j) \in I''_{k''}}}=0$.

For instance, in the example, one has $\ee_{(2,1)}\odot \ee_{(3,2)} =\ee_{(2,1),(3,2)}$, but $\ee_{(2,1)},\ee_{(1,1)} =0$ because $(1,1) \in I_1$ and $(2,1) \in I_2$ . One has $\ee_{(1,2),(2,1)} \odot \ee_{(2,1),(3,2)}=0$ because $I'_{k'} \cap I''_{k''} = \{ (2,1) \} \neq \emptyset$.

Finally, one defines $\bs{\beta}_{\{(i,j) \}_{(i,j) \in I'_{k'}}} =1$ if $I'_{k'}=I_k$ for a certain $k$, $\bs{\beta}_{\{(i,j) \}_{(i,j) \in I'_{k'}}} =0$ otherwise.
\end{proof}
}

	\section{Examples}
		In this section, we present some examples of recognizable hypergraph series in order to give some insight on the expressiveness of HWMs and on how their computation  relates to the usual notion of recognizable series on strings and trees.

\subsection{Rooted Circular Strings}
First note that the family of circular strings is not tiling-free: the circular string $abab$ is a tiling of $ab$. Instead of the construction described in Example~\ref{def_hypergraphs_stringEx}, we can map each string $w$ on a finite alphabet $\Sigma$ to a \emph{rooted circular string}. Let $w = w_1 \cdots w_n \in \Sigma^*$, we will consider the circular string $G_w$ on the ranked alphabet $(\Sigma \cup \{\lambda\}, \sharp)$ where $\sharp x = 2$ for any $x\in \Sigma \cup \{\lambda\}$, with vertices $V = \{0,\cdots,n\}$, labels $l(0) = \lambda$ and $l(i) = w_i$ for $i \in [n]$, and edges $\{(n,2), (0,1)\}$ and $\{(i, 2), (i+1,1)\}$ for $ i \in \{0,\cdots,n-1\}$ (see Figure~\ref{fig_circular_string}).

Let $r = \langle \R^d, \vs{\iota}, \{\v{M}_\sigma\}_{\sigma \in \Sigma},\vs{\tau} \rangle $ be a rational series on $\Sigma^*$. We define the HWM $A=\langle \R^d, \{\e{A}^x\}_{x\in \Sigma} , \odot_{id}, \v{1} \rangle$ where $\e{A}^\sigma = \v{M}_\sigma$ for all $\sigma \in \Sigma$ and $\e{A}^\lambda = \vs{\iota} \vs{\tau}^\top$. It is easy to check that $r_A(G_w) =r(w)$ for all $w \in \Sigma^*$.

Now consider $m$ rational series on  $\Sigma^*$ with $d$-dimensional linear representations $\langle \vs{\iota}_i, \{\v{M}_\sigma\}_{\sigma \in \Sigma}, \vs{\tau}_i \rangle $ for $i\in [m]$. The string series $ r = r_1 + \cdots + r_m$  is rational and its dimension can be as high as $dm$. However, the HWM $A=\langle \R^d, \{\e{A}^x\}_{x\in \Sigma} , \odot_{id}, \v{1} \rangle$ where $\e{A}^\sigma = \v{M}_\sigma$ for all $\sigma \in \Sigma$ and $\e{A}^\lambda = \sum_{i=1}^m \vs{\iota}_i \vs{\tau}_i^\top$ is such that $r_A(G_w) = r(w)$ for all $w\in \Sigma^*$, and is of dimension $d$.

\subsection{Recognizing $a^n b^n$}
Given an even length string on the alphabet $\{a,b\}$, we can enrich the construction described in Example~\ref{def_hypergraphs_stringEx} by associating a vertex of arity 3 to each letter, and adding extra edges connecting letters in the first half of the string to letters in the second half. Formally, given a word $w_1\cdots w_{2n}$ on $\Sigma$, we consider the  \emph{3-ary graph representation} of $w$ given by the graph $G = (V,E,l)$ on the ranked alphabet $(\Sigma \cup \{\iota, \tau\}, \sharp)$ where $\sharp \iota = \sharp \tau = 1$, $\sharp \sigma = 3$ for all $\sigma \in \Sigma$, $V = \{ 0 \cdots 2n+1\}$, $l(0) =  \iota$, $l(2n+1) = \tau$, $l(i) = w_i$ for $1\leq i \leq 2n$, and the set of edges is composed of $\{(0,1), (1,1)\}$, $\{(i,2), (i+1,1)\}$ for $1 \leq i \leq 2n$ and $\{(i,3), (n+i,3)\}$ for $1\leq i \leq n$.  The 3-ary graph representation of the string $abaa$ is shown in Figure~\ref{fig_circular_string}. 

Using this construction, it is easy to show that there exists a HWM (on the family of 3-ary graph representations of even length strings) whose support is the set of 3-ary graph representations of the language $\{a^nb^n: n\geq 1\}$.

\comment{
\subsection{Rational series on strings} Let $\Sigma$ be a finite
alphabet. We have seen that any string $s$ over $\Sigma$ can be
represented as a graph $G_s$ over $\Sigma\cup \{\iota, \tau\}$, and
for any $d$-dimensional rational series $r$ over
$\Sigma$, there exists a $d$-dimensional HWM $M$ such that $r(s)=r_{M}(G_s)$
 for all $s \i \Sigma^*$. We can obtain an equivalent result
without adding new symbols $\iota$ and $\tau$ but with an HWM
with complex coefficients.
 
\begin{theorem}\label{th:str}
Let $\Ac = \langle\R^{d},  \{\v{A}^\sigma\}_{\sigma\in \Sigma},
\vs{\iota}, \vs{\tau}\rangle$ be a rational string
series on $\Sigma^*$ and let $(\Sigma,\sharp)$ be the ranked alphabet where all symbols have arity 2.

There exists a HWM $\Cc = \langle \C^{d},\{\e{C}^\sigma\}_{\sigma\in\Sigma}, \odot_{id}, \mathbf{1}\rangle$ such that $$r_{\Cc}(G_w) = r_{\Ac}(w) $$
for any non empty string $w$.
\end{theorem}

\begin{proof}
First, we show that for any rational string series $r:\langle \R^d, \vs{\iota}, \vs{\tau}, \{\v{M}^\sigma\}_{\sigma \in \Sigma}\rangle$ there exists a rational series $s:\langle \C^d, \vs{\alpha}, \vs{\alpha}, \{\v{N}^\sigma\}_{\sigma \in \Sigma}\rangle$ such that $s(w) = r(w)$ for all $w \in \Sigma^*$. Indeed, let $(\v{e}_1, \cdots, \v{e}_d)$ be a basis of $\R^d$ such that $\v{e}_i^\top \vs{\tau} \not= 0$ and  $\v{e}_i^\top \vs{\iota} \not= 0$ for all $i \in [d]$. Let $\v{D}\in \C^{d\times d}$ be the diagonal matrix defined by $\v{D}_{ii} = (\v{e}_i^\top \vs{\tau})^{1/2} /  (\v{e}_i^\top \vs{\iota})^{1/2}$. We have $\v{D} \vs{\iota} = \v{D}^{-1} \vs{\tau}$ and the series $s:  \langle \C^d, \v{D}\vs{\iota}, \v{D}^{-1} \vs{\tau}, \{\v{D}^{-1}\v{M}^\sigma\v{D}\}_{\sigma \in \Sigma}\rangle$ is such that $s(w) = r(w)$ for all $w \in \Sigma^*$.

Let $(\v{e}_1, \dots, \v{e}_d, \v{f}_1, \dots, \v{f}_d)$ be the canonical basis of
$\R^{2d}$. For any $\sigma \in \Sigma$, let $\v{A}^\sigma = \sum_{i\in[d]}\v{a}^\sigma_i \otimes \widetilde{\v{a}}^\sigma_i$
be a rank one  decomposition of the matrix $\v{A}^\sigma$. The HWM $\Cc$ is defined by
\begin{itemize}
\item $\e{C}^\sigma = \sum_{i,j\in [d]} 
\begin{pmatrix}
\v{a}^\sigma_i\\\v{0}
\end{pmatrix} \otimes
\begin{pmatrix}
\v{0}\\\widetilde{\v{a}}^\sigma_i
\end{pmatrix} =\begin{pmatrix}
\v{0}&\v{A}^\sigma\\\v{0}&\v{0}
\end{pmatrix}$ for all $\sigma\in\Sigma$.

\item $\vs{\gamma} =\begin{pmatrix}
 \vs{\iota}\\ \vs{\tau} 
\end{pmatrix}$

\item $\v{e}_i \odot \v{f}_j =\v{f}_j \odot \v{e}_i = \begin{cases}
 \v{e}_i & \mbox{if }i=j \\
 \vs{0} &\mbox{otherwise}
 \end{cases}$ and $\v{e}_i\odot \v{e}_j = \v{f}_i\odot \v{f}_j = \mathbf{0}$ for all $i,j \in [d]$.
\end{itemize}
Note that the product $\odot$ defined above is not associative, but this is not a problem since we only consider graphs here.

For any $\v{a},\widetilde{\v{a}}\in \R^{d}$
$$\vs{\gamma}^\top \left[ \begin{pmatrix}
 \vs{0} \\ \widetilde{\v{a}}
\end{pmatrix} \odot
\begin{pmatrix}
\v{a} \\ \vs{0}
\end{pmatrix} \right] = \widetilde{\v{a}}^\top \v{a}.$$
It then follows that for any non empty string $\sigma = \sigma_1 \cdots \sigma_N$ in $\Sigma^N$,
\begin{align*}
 r_\Ac(\sigma) &= \vs{\iota}^\top \v{A}^{\sigma_1} \cdots \v{A}^{\sigma_N} \vs{\tau} \\ 
 &= \sum_{i_1,\cdots,i_N \in [d]} \vs{\iota}^\top \v{a}^{\sigma_1}_{i_1} (\widetilde{\v{a}}^{\sigma_1}_{i_1})^\top \v{a}^{\sigma_2}_{i_2} (\widetilde{\v{a}}^{\sigma_2}_{i_2})^\top \cdots \v{a}^{\sigma_N}_{i_N} (\widetilde{\v{a}}^{\sigma_N}_{i_N})^\top \vs{\tau} \\
 &= \sum_{i_1,\cdots,i_N \in [d]} \vs{\gamma}^\top \begin{pmatrix}
\v{a}^{\sigma_1}_{i_1} \\ \vs{0}
\end{pmatrix} \cdot \prod_{n\in [N-1]} \vs{\gamma}^\top \left[ \begin{pmatrix}
 \vs{0} \\ \widetilde{\v{a}}^{\sigma_n}_{i_n}
\end{pmatrix} \odot \begin{pmatrix}
\v{a}^{\sigma_{n+1}}_{i_{n+1}} \\ \vs{0}
\end{pmatrix} \right] \cdot
\begin{pmatrix}
\vs{0} \\ \widetilde{\v{a}}^{\sigma_N}_{i_N}
\end{pmatrix}^\top \vs{\gamma}\\
&= r_{\Cc}(G_\sigma).
\end{align*}

\end{proof}
}
\subsection{Crosswords}
\label{sec_crosswords}

Let $(\Sigma,\sharp)$ be a ranked alphabet where all symbols have
arity 4. An $(M,N)$-crossword $w$ on $\Sigma$ is  an array of symbols
$[w_{ij}]_{ij}\in \Sigma^{M\times N}$. The graph $G_w = (V,E,l)$
associated to the crossword $w$ is the graph with vertices $V =
[M]\times [N]$,  $l(m,n) = w_{mn}$, and edges $E=E_H\cup E_V$, where
the ports are labeled by $W, E, N, S$, in this order, and where 
$$E_H=\bigcup_{m\in [M],n\in [N-1]} \left\{ \{(m,n)^E,(m,n+1)^W\}
\right\}\bigcup_{m\in [M]} \left\{ \{(m,1)^W\},\{(m,N)^E\} \right\}$$
and
$$E_V=\bigcup_{n\in [N],m\in [M-1]} \left\{ \{(m,n)^S,(m+1,n)^N\} \right\}\bigcup_{n\in [N]} \left\{ \{(1,n)^N\},\{(M,n)^S\} \right\}.$$
\vspace{0.2cm}
An example of graph associated to a 2D word is shown in Figure~\ref{fig_circular_string}.
 Now, let $\#_1$ be a new arity function, such that all symbols of $\Sigma$
have arity 2. The graph $G_w$ can be decomposed in two graphs  over $(\Sigma,\#_1)$:
$G_w^{H}=(V,E_H,l)$ where the ports are labeled $W,E$ in this order
and $G_w^{V}=(V,E_V,l)$ where the ports are labeled $N,S$ in this
order. 

\begin{theorem}\label{th:cw} Given $\Ac = \langle
\R^{d_1},\{\e{A}^\sigma\}_{\sigma\in\Sigma}, \odot_1,
\vs{\beta}_1\rangle$ and 
$\Bc = \langle
\R^{d_2},\{\e{B}^\sigma\}_{\sigma\in\Sigma}, \odot_2,
\vs{\beta}_2\rangle$, two HWMs over $(\Sigma,\#_1)$, there exists a HWM $\Cc =
\langle \R^{d_1+d_2},\{\e{C}^\sigma\}_{\sigma\in\Sigma}, \odot,
\vs{\beta}\rangle$ over $(\Sigma, \sharp)$ such that $$r_{\Cc}(G_w) = r_{\Ac}(G_{w}^H)\times r_{\Bc}(G_{w}^V)$$
for any $(M,N)$-crossword $w$.

\end{theorem}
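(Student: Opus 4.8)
The plan is to place the horizontal computation on the first $d_1$ coordinates of $\R^{d_1+d_2}$ and the vertical one on the last $d_2$, exploiting the fact that in $G_w$ a $W$ or $E$ port is only ever joined (by an edge of $E_H$) to another $W$/$E$ port, and an $N$ or $S$ port only (by $E_V$) to an $N$/$S$ port. Writing the four modes of an arity-4 symbol in the port order $W,E,N,S$, I would set
\[
\e{C}^\sigma_{i_1 i_2 i_3 i_4}=
\begin{cases}
\e{A}^\sigma_{i_1 i_2}\,\e{B}^\sigma_{(i_3-d_1)(i_4-d_1)} & \text{if } i_1,i_2\in[d_1]\text{ and } i_3,i_4\in\{d_1+1,\dots,d_1+d_2\},\\
0 & \text{otherwise,}
\end{cases}
\]
take $\vs{\beta}$ to equal $\vs{\beta}_1$ on the first block and $\vs{\beta}_2$ on the second, and let $\odot$ be the block-diagonal product that agrees with $\odot_1$ on $[d_1]$, with the shifted $\odot_2$ on the second block, and returns $\v{0}$ across blocks. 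This is exactly the product constructed in Proposition~\ref{stability_addition} (via the shift $t_{d_1}$), so it is already known to be associative and symmetric, hence $\Cc$ is a legitimate HWM over $(\Sigma,\sharp)$.

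Next I would observe that, by the support of $\e{C}^\sigma$, the coefficient $\e{C}_\gamma=\prod_{v}\e{C}^{l(v)}_{\gamma(v^{W})\gamma(v^{E})\gamma(v^{N})\gamma(v^{S})}$ vanishes unless $\gamma$ sends every $W,E$ port into $[d_1]$ and every $N,S$ port into the second block; in that case it factors per vertex as $\e{C}_\gamma=\e{A}_{\gamma_H}\,\e{B}_{\gamma_V}$, where $\gamma_H$ (resp.\ $\gamma_V$) is the restriction of $\gamma$ to the $W,E$ ports (resp.\ the $N,S$ ports), read as an element of $[d_1]^{P_{G_w^H}}$ (resp.\ $[d_2]^{P_{G_w^V}}$). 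Since every hyperedge of $E_H$ contains only $W,E$ ports and every hyperedge of $E_V$ only $N,S$ ports, no surviving $\gamma$ ever combines the two blocks inside a single hyperedge; hence each horizontal edge contributes the factor $\vs{\beta}_1^\top\bigodot_{i\in\gamma_H(h)}\v{e}_i$ computed with $\odot_1$, and each vertical edge the factor $\vs{\beta}_2^\top\bigodot_{i\in\gamma_V(h)}\v{e}_i$ computed with $\odot_2$ (the single-port boundary edges being handled identically, as $\vs{\beta}^\top\v{e}_i$ restricts to $\vs{\beta}_1^\top\v{e}_i$ on the first block and $\vs{\beta}_2^\top\v{e}_i$ on the second).

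Finally I would split the sum defining $r_\Cc(G_w)$ over the surviving assignments as a double sum over $\gamma_H\in[d_1]^{P_{G_w^H}}$ and $\gamma_V\in[d_2]^{P_{G_w^V}}$, and use $E=E_H\sqcup E_V$ to separate the product over hyperedges, obtaining
\[
r_\Cc(G_w)=\Bigg(\sum_{\gamma_H}\e{A}_{\gamma_H}\prod_{h\in E_H}\vs{\beta}_1^\top\bigodot_{i\in\gamma_H(h)}\v{e}_i\Bigg)\Bigg(\sum_{\gamma_V}\e{B}_{\gamma_V}\prod_{h\in E_V}\vs{\beta}_2^\top\bigodot_{i\in\gamma_V(h)}\v{e}_i\Bigg)=r_\Ac(G_w^H)\,r_\Bc(G_w^V).
\]
The only real obstacle is making this separation airtight: one must check that the typing of the ports ($W,E$ versus $N,S$) forces every contraction to stay within a single block, so that the per-vertex factorization of $\e{C}_\gamma$ and the edge-wise factorization over $E_H\sqcup E_V$ line up. Notably, and in contrast with the additive construction of Proposition~\ref{stability_addition}, this separation comes for free from the port structure and requires \emph{no} connectivity hypothesis on $G_w$; the remainder is routine index bookkeeping.
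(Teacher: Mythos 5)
Your proposal is correct and follows essentially the same route as the paper: your $\e{C}^\sigma$ is exactly the paper's block-embedded tensor product $\e{A}^\sigma\otimes\e{B}^\sigma$ (with the $W,E$ modes on the first block of $\R^{d_1+d_2}$ and the $N,S$ modes on the second), combined with the same block-diagonal product $\odot$ vanishing across blocks and the concatenated $\vs{\beta}$, and your decomposition of each surviving assignment $\gamma$ into $(\gamma_H,\gamma_V)$ with $\e{C}_\gamma=\e{A}_{\gamma_H}\e{B}_{\gamma_V}$ followed by factoring the edge product over $E_H$ and $E_V$ is the paper's argument verbatim. Your additional remarks --- the explicit check that the port typing forces every contraction to stay within one block, and the observation that unlike Proposition~\ref{stability_addition} no connectivity hypothesis is needed --- merely spell out what the paper leaves as ``easy to check.''
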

\begin{proof}
Let $(\v{e}_1, \dots, \v{e}_{d_1}, \v{f}_1, \dots, \v{f}_{d_2})$ be the canonical basis
of $\R^{d_1+d_2}$. The HWM $C$ is defined by
\begin{itemize}
\item $\e{C}^\sigma=\e{A}^\sigma\otimes \e{B}^\sigma=\left(\sum_{i_1, \dots,
    i_{\sharp \sigma}\in [d_1]}\e{A}_{i_1, \dots, i_{\sharp \sigma}}\v{e}_{i_1}\otimes \dots
  \v{e}_{i_{\sharp \sigma}}\right)\otimes \left(\sum_{i_1, \dots,
    i_{\sharp \sigma}\in [d_2]}\e{B}_{i_1, \dots, i_{\sharp \sigma}}\v{f}_{i_1}\otimes \dots
  \v{f}_{i_{\sharp \sigma}}\right)$ for any $\sigma \in \Sigma$
\item $\v{e}_i\odot \v{e}_j=\v{e}_i\odot_1 \v{e}_j, \v{f}_i\odot \v{f}_j=\v{f}_i\odot_2 \v{f}_j$ and
  $\v{e}_i\odot \v{f}_j=\v{f}_j\odot \v{e}_i=0$ for any indices $i,j$
\item $\vs{\beta}= \left( \begin{matrix}
\vs{\beta}_1\\
\vs{\beta}_2
\end{matrix}\right)$. 
\end{itemize}

By definition, we
have $$r_\Cc(G_w)=\sum_{\gamma\in\Gamma}\e{C}_{\gamma}\prod_{h\in
  E}\vs{\beta}^{\top}\bigodot_{i\in\gamma(h)}\v{g}_i$$ where $\v{g}_i = \v{e}_i$ if $i \in [d_1]$ and $\v{f}_i$ otherwise. Let $\Gamma_H = [d_1]^{P_{G_w^H}}$ and $\Gamma_V= [d_2]^{P_{G_w^V}}$. It is easy to check that any $\gamma\in \Gamma$ for which $\e{C}_\gamma \not= 0$ can be associated with a tuple $(\gamma_H,\gamma_V) \in \Gamma_H  \times \Gamma_V$ satisfying
$\e{C}_{\gamma}=\e{A}_{\gamma_H}\e{B}_{\gamma_V}$, we have
\begin{align*}
  r_\Cc(G_w)&=\sum_{\gamma_H\in\Gamma_H}\sum_{\gamma_V\in\Gamma_V}\e{A}_{\gamma_H}\e{B}_{\gamma_V}\prod_{h\in
    E_H}\vs{\beta}_1^{\top} \left( \underset{i\in\gamma_H(h)}{\odot_1}\v{e}_i \right) \times
  \prod_{h\in
    E_V}\vs{\beta}_2^{\top} \left( \underset{i\in\gamma_V(h)}{\odot_2}\v{f}_i\right)   \\
&=r_{\Ac}(G_{w}^H)\times r_{\Bc}(G_{w}^V).
\end{align*}

\end{proof}

Given a $(M,N)$-crossword $w$, we note $w_{m:}$ for the $m$-th row of $w$ and $w_{:n}$ for its $n$-th column.

\begin{corollary}
Let $\Ac = \langle\R^{d_1},  \{\v{A}^\sigma\}_{\sigma\in \Sigma}, \vs{\alpha}_0, \vs{\alpha}_{\infty}\rangle$ and $\Bc = \langle \R^{d_2}, \{\v{B}^\sigma\}_{\sigma\in \Sigma}, \vs{\beta}_0, \vs{\beta}_{\infty}\rangle$ be two rational string series on $\Sigma^*$.

There exists a HWM $\Cc = \langle \C^{d_1+d_2},\{\e{C}^\sigma\}_{\sigma\in\Sigma}, \odot, \vs{\gamma}\rangle$ such that $$r_{\Cc}(G_w) = \prod_{m\in [M]} r_{\Ac}(w_{m:}) \prod_{n\in [N]} r_{\Bc} (w_{:n})$$
for any $(M,N)$-crossword $w$.
\end{corollary}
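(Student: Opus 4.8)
The plan is to reduce the corollary to the machinery already built: Proposition~\ref{prop_iota_eq_tau}, which realizes a rational string series as a HWM acting on the "open" string graph $H_w$ without the extra symbols $\iota,\tau$, together with Theorem~\ref{th:cw}, which tensors a horizontal and a vertical arity-2 HWM into a single arity-4 HWM on crosswords. First I would apply Proposition~\ref{prop_iota_eq_tau} to $\Ac = \langle \R^{d_1}, \vs{\alpha}_0, \{\v{A}^\sigma\}_{\sigma\in\Sigma}, \vs{\alpha}_\infty\rangle$ to obtain a HWM $\Ac'$ over $(\Sigma,\#_1)$ of dimension $d_1$, with complex coefficients, satisfying $r_{\Ac'}(H_u) = r_{\Ac}(u)$ for every string $u$; symmetrically I would obtain a HWM $\Bc'$ of dimension $d_2$ from $\Bc$. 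The passage to $\C$ in the statement is forced precisely by this step, since Proposition~\ref{prop_iota_eq_tau} encodes the initial/final vectors into the product $\odot$ and the vector $\vs{\alpha}$ using complex square roots.

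The key structural observation is that the horizontal graph $G_w^H$ is exactly the disjoint union of the open string graphs of the rows of $w$. After relabeling its ports $W,E$ as $1,2$, each connected component of $G_w^H$ is precisely $H_{w_{m:}}$: the interior edges $\{(m,n)^E,(m,n+1)^W\}$ match the edges $\{(i,2),(i+1,1)\}$ of the $H_w$ construction, and the two singleton boundary hyperedges $\{(m,1)^W\}$ and $\{(m,N)^E\}$ play the role of the end hyperedges $\{(1,1)\}$ and $\{(n,2)\}$. Likewise, relabeling $N,S$ as $1,2$ identifies each component of $G_w^V$ with the open string graph of a column $w_{:n}$. Invoking Remark~\ref{prop_connected_components} (multiplicativity of $r_M$ over connected components) then yields
\begin{align*}
r_{\Ac'}(G_w^H) &= \prod_{m\in[M]} r_{\Ac'}(H_{w_{m:}}) = \prod_{m\in[M]} r_{\Ac}(w_{m:}), \\
r_{\Bc'}(G_w^V) &= \prod_{n\in[N]} r_{\Bc'}(H_{w_{:n}}) = \prod_{n\in[N]} r_{\Bc}(w_{:n}).
\end{align*}

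Finally I would feed $\Ac'$ and $\Bc'$ into Theorem~\ref{th:cw} to produce the crossword HWM $\Cc = \langle \C^{d_1+d_2}, \{\e{C}^\sigma\}_{\sigma\in\Sigma}, \odot, \vs{\gamma}\rangle$, whose construction (direct sum of the spaces, $\e{C}^\sigma = \e{A}'^\sigma \otimes \e{B}'^\sigma$, block-diagonal product) is insensitive to the base field and hence applies verbatim over $\C$; Theorem~\ref{th:cw} gives $r_{\Cc}(G_w) = r_{\Ac'}(G_w^H)\, r_{\Bc'}(G_w^V)$. Chaining this identity with the two displayed factorizations produces the claimed formula, and the dimension $d_1+d_2$ is exactly the one coming from the direct-sum construction of that theorem. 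I expect the only delicate point to be the bookkeeping of the port labels: one must check that the $W/E$ (respectively $N/S$) ordering fixed in the definitions of $G_w^H$ and $G_w^V$ coincides with the port ordering $1,2$ assumed by the $H_w$ construction, so that the boundary singletons correctly realize the roles played by $\vs{\alpha}_0$ and $\vs{\alpha}_\infty$ in Proposition~\ref{prop_iota_eq_tau}. Everything else is a direct chaining of the three cited results.
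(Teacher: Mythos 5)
Your proposal is correct and follows essentially the same route as the paper's own proof, which likewise chains Proposition~\ref{prop_iota_eq_tau}, the multiplicativity of HWMs over connected components, and Theorem~\ref{th:cw}. You in fact supply more detail than the paper does, notably the port-relabeling bookkeeping and the observation that the construction of Theorem~\ref{th:cw} carries over verbatim to $\C$, both of which the paper leaves implicit.
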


\begin{proof}
The result directly follows from Proposition~\ref{prop_iota_eq_tau} and Theorem~\ref{th:cw}, and
  remarking that the HWM $\Mc_H$ (resp. $\Mc_V$) that computes
$r_{\Ac}$ (resp. $r_{\Bc}$) satisfies $\Mc_H(G_w^H)=\prod_{m\in
  [M]} r_{\Ac}(w_{m:})$ (resp. $\Mc_V(G_w^V)=\prod_{n\in [N]} r_{\Bc}
(w_{:n}))$ since the graphs $G_w^H$ (resp. $G_w^V$) has $M$
(resp. $N$) connected components.
\end{proof}
\begin{figure}

\vspace{-1cm}
\begin{center}
\includegraphics[scale=1, trim= 0cm -0.75cm 0cm 0cm]{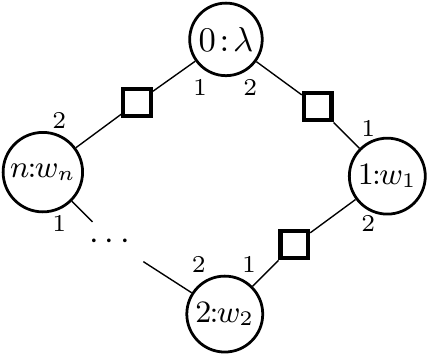}
\hspace{1cm}\includegraphics[scale=1.1]{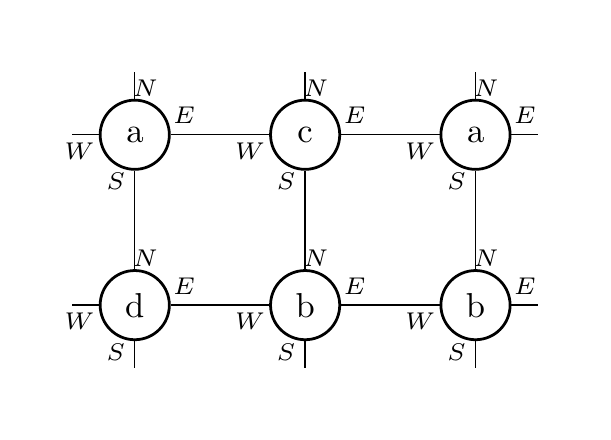}

\vspace{-0.5cm}
\includegraphics[scale=1]{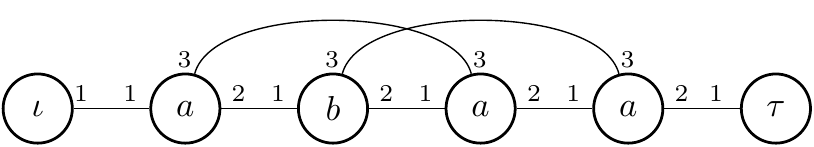}
\end{center}

\caption{(left) Rooted circular string, (right) graph associated to the 2D-word $aca\atop dbb$  and (bottom) graph associated to the string $abaa$ with extra edges connecting $u_n$ to $u_{2+n}$.}
\label{fig_circular_string}
\end{figure}

\section{Conclusion}
	The model we propose naturally generalizes recognizable series on strings and trees. It satisfies closure properties by sum and Hadamard product. We have analysed why finite support series on some families of hypergraphs are not recognizable, and we exhibit a sufficient condition on families of hypergraph for the recognizability of finite support series. 

Since many data over a variety of fields naturally present a graph structure (images, secondary structure of RNA in bioinformatics, dependency graphs in NLP, etc.), this computational model offers a broad range of applications.


The next theoretical step will be to study how learning can be achieved within this framework, i.e. how the tensor components of the model $M$ can be recovered or estimated from samples of the form $(G_1, \widehat{r_M(G_1)}), \dots, (G_n, \widehat{r_M(G_n)})$. Preliminary results on circular strings indicate that this is a promising direction. General learning algorithms should rely on tensor decomposition techniques, which generalize the spectral methods used for learning rational series on strings and trees. We also plan to tackle algorithmic issues and to study how techniques and methods developed in the field of graphical models, such as message passing, variational methods, etc., could be adapted to the setting of HWMs. 


\bibliographystyle{plain}
\bibliography{mfcs}
\nocite{*}
\newpage
\end{document}